\newcommand{\E}{\mathrm{E}}
\newcommand{\Var}{\mathrm{Var}}
\newcommand{\svr}{{\mathcal B}_r(v)}
\newcommand{\svone}{{\mathcal B}_1(v)}
\newcommand{\svtwo}{{\mathcal B}_2(v)}
\newcommand{\hll}{{\sc HyperLogLog}\xspace}
\newcommand{\hb}{{\sc HyperBall}\xspace}
\newcommand{\eb}{{\sc HyperEdgeball}\xspace}
\newcommand{\hanf}{{\sc HyperANF}\xspace}
\newcommand{\vp}{Vysochanskij-Petunin\xspace}
\newcommand{\vol}{\text{vol}}
\newcommand{\edge}{|\mathcal{E}_r(v)|}
\newcommand{\edgedirected}{|\mathcal{E}^{-}_r(v)|}
\newcommand{\estedge}{\widehat{|{\mathcal{E}}_r(v)|}}
\newcommand{\estedgedirected}{\widehat{|\mathcal{E}^{-}_r(v)|}}
\newcommand{\esttriangle}{\hat{\Delta}_r(v)}
\renewcommand{\triangle}{\Delta_r(v)}
\newcommand{\etam}{\frac{\beta_p}{\sqrt{p}}}
\newcommand{\estwedge}{\hat{w}_r(v)}
\renewcommand{\wedge}{w_r(v)}
\newcommand{\phie}{\phi\big(S_r(v)\big)}
\newcommand{\estphi}{\hat{\phi}\big(\mathcal{B}_r(v)\big)}
\newtheorem{definition}{Definition}
\newtheorem{theorem}{Theorem}
\newcommand{\footremember}[2]{%
   \footnote{#2}
    \newcounter{#1}
    \setcounter{#1}{\value{footnote}}%
}
\newcommand{\footrecall}[1]{%
    \footnotemark[\value{#1}]%
} 
\begin{document}

\title{Locating highly connected clusters in large networks with HyperLogLog counters}

% \shorttitle{Locating clusters in large networks with HyperLogLog counters} %%%for recto running head
% \shortauthorlist{L. Weedage, N. Litvak, C. Stegehuis} %%% for verso running head

\author{%%%% First author details
Lotte Weedage\footremember{ut}{Department of Electrical Engineering, Mathematics and Computer Science, 
University of Twente, the Netherlands}, Nelly Litvak\footrecall{ut}, Clara Stegehuis\footrecall{ut}}

\maketitle

\begin{abstract}
{{ In this paper we introduce a new method to locate highly connected clusters in a network. Our proposed approach adapts the \hb algorithm \cite{boldi2013core} to localize regions with a high density of small subgraph patterns in large graphs in a memory-efficient manner. We use this method to evaluate  three measures of subgraph connectivity: conductance, the number of triangles, and transitivity. We demonstrate that our algorithm, applied to these measures, helps to identify clustered regions in graphs, and provides good seed sets for community detection algorithms such as PageRank-Nibble. We analytically obtain the performance guarantees of our new algorithms, and demonstrate their effectiveness in a series of numerical experiments on synthetic and real-world networks.}}
{hyperloglog, probabilistic counting, network clustering}
%%%% If classification number provided then
\\
% 2000 Math Subject Classification: 34K30, 35K57, 35Q80,  92D25
\end{abstract}

\section{Introduction}
Networks describe the connections between pairs of objects. Examples of networks are ubiquitous and include social networks, the Internet or communication networks. While these examples are very different from an application point of view, they share many characteristics. For example, in many real-world networks, objects have the tendency to cluster together in groups.  The task of finding these densely connected spots, or clusters, in the network, has been  a subject of vast research. 

A quantity that is commonly used to measure the quality of clusters in a graph representation of a network, is the conductance of a cluster. The conductance is based on the \textit{min-cut} \cite{stoer1997simple}, and gives a ratio of the number of  edges connected to nodes outside the cluster relative to the number of edges inside the cluster. The use of conductance as a measure to find communities is one of the most useful and important cut-based methods that exists \cite{schaeffer2007graph}.  

Another measure that can find clustered groups of nodes in a graph is the number of  triangles, because a triangle in a graph is the most clustered subgraph consisting of three nodes. For any subgraph in a network, one can  measure its quality as a cluster by counting  the number of triangles in this subgraph. Another option is to find its \textit{transitivity}, which is the ratio of the number of triangles versus the number of wedges in the subgraph, and therefore it tells us how clustered the graph is. 
%For example, social networks often contain an abundance of triangles since it is likely that two friends of a given individual know each other. 
Moreover, the task of finding triangles itself  has interesting applications in for example, biological networks \cite{tran2013counting}, spam detection \cite{becchetti2008link} or link recommendations \cite{tsourakakis2011spectral}. 

Finding dense parts of a graph, that have a low conductance, a high number of triangles, or a high transitivity, is a computationally demanding task because real-world networks often contain millions or even billions of nodes. 
In this paper we propose new methods to efficiently compute three measures of clustering -- the conductance, the number of triangles, and the transitivity -- for the ball subgraphs $\svr$:  the induced subgraphs that contain all nodes within graph distance $r$ of node $v$. The identified ball subgraphs of low conductance or high transitivity reveal locations of  dense areas in the network and can be used in community detection, for example, as \textit{seed sets} of other more time- and memory-consuming algorithms  such as {\sc PageRank Nibble} \cite{andersen2006local} or the {\sc Multi Walker Chain} model \cite{bian2017many}.

Our proposed algorithms for computing conductance and transitivity use probabilistic HyperLogLog counters to estimate the number of edges, wedges and triangles in ball subgraphs. This class of randomized algorithms stems from the \hll algorithm \cite{flajolet2007hyperloglog}  for counting the number of distinct elements in large streams of data, such as the number of unique visitors on a web page or the number of different genomes in biological data. These counters give an accurate estimate of large cardinalities, and moreover are memory-efficient.  Boldi and Vigna \cite{boldi2011hyperanf} have successfully adapted the \hll algorithm to the networks context by developing the \hb algorithm, which counts the number of nodes in a ball subgraph $\svr$ for every node $v$ and every radius $r$. They used this to approximate centrality measures in large graphs and to find the distribution of distances between pairs of nodes in a network of Facebook users \cite{backstrom2012four}.  This idea of counting nodes in ball subgraphs has also been extended to counting distinct edges in ball subgraphs \cite{essay79108} or in a stream of edges \cite{wang2017approximately}. In this paper, we demonstrate that the potential of HyperLogLog-type counters on graphs is greater than only counting nodes and edges, but can extend to counting other patterns in networks, and can be used to approximate popular measures of clustering.

The main contribution of this paper is in designing memory-efficient HyperLogLog-based algorithms for computing  the conductance, the number of triangles, and the transitivity in ball subgraphs. We analytically derive accurate error bounds for these algorithms, and empirically confirm their high performance. Moreover, we demonstrate applications of our methods to community detection in synthetic and real-world networks. Our results show that the identified highly clustered ball subgraphs perform very well as seed sets of the {\sc PR-Nibble} algorithm \cite{andersen2006local}, improving on previously used benchmarks.

The structure of this paper is as follows. In Section~\ref{sec:preliminaries}, we provide a brief recap on algorithms for counting nodes and edges in graphs using HyperLogLog counters. In Section~\ref{sec:algorithm}, we extend these algorithms to other patterns in graphs, and we present our new methods for approximating the conductance, the number of triangles,  and the transitivity in ball subgraphs. In Section~\ref{sec:performance}, we analytically derive accurate error bounds of the estimators for the conductance, triangle count and transitivity. In Section ~\ref{sec:results}, we experimentally evaluate the performance of our algorithms on a number of synthetic networks. In Section~\ref{sec:com_detection}, we demonstrate application of our methods to community detection. We conclude in Section~\ref{sec:discussion} with discussion.

\section{Preliminaries: \hll, \hb, and \eb}
\label{sec:preliminaries}

\subsection{\hll algorithm}
The \hll algorithm is a probabilistic counting technique that estimates the number of distinct elements of a large dataset, called the dataset cardinality. While a naive deterministic approach requires storage of all distinct elements observed so far, the randomized \hll algorithm is extremely memory-efficient, yet delivers accurate cardinality estimations. 

We will now briefly outline the idea of the  \hll algorithm~\cite{flajolet2007hyperloglog}. 
 The input of the algorithm is a \textit{multiset} $\mathcal{M}$, a stream of data items that are read in order of occurrence.
The algorithm uses a hash function $h: \mathcal{M} \rightarrow \{0, 1\}^\infty$ that assigns a binary string to every element of $\mathcal{M}$. The hash function is deterministic in the sense that it assigns exactly the same value to identical elements of $\mathcal{M}$. However, $h$ is constructed in such a way that its bits can be assumed to be independent Bernoulli random variables with probability 1/2 of 0 and 1. Then one can use  the principle of \textit{bit-pattern observables}: for example, in order to encounter the pattern $00001$ at the beginning of a string, one needs to observe, on average, 32 different items. For such estimate, the algorithm needs to store in memory only `how rare' the `most rare' observed binary sequence is, for example, the maximal number of zeros observed so far at the beginning of a binary sequence. The name \hll refers to this extremely low, double-logarithmic, memory requirements. To obtain accurate estimates, the algorithm uses registers. That is, the first $b$ bits of $h$ are used for identifying one of the $p=2^b$ registers, and the string after that is used to compute the cardinality estimate in this register. The algorithm initialises an empty counter with $p = 2^b$ registers, where every register corresponds to an entry of the counter. The more registers we use, the more precise the cardinality estimate will be. More precisely, the \hll algorithm returns the following estimate $E$ of  the number of unique elements of multiset $\mathcal{M}$: 
\begin{align}
    E:= \frac{\alpha_p p^2}{\sum_{j=1}^p2^{-M[j]}}, \hspace{1cm} \text{with } \alpha_p := \Bigg(p \int_0^\infty \Bigg(\log_2\Big(\frac{2+u}{1+u}\Big)\Bigg)^p \text{d}u\Bigg)^{-1} \label{eq:hllsize},
\end{align}
where $M[j]$ is the cardinality estimate of register $j$.
The pseudocode of the \hll algorithm is provided in Algorithm~\ref{alg:hyperloglog} in Appendix \ref{sec:appendix_algorithms}. Overall, the \hll algorithm uses $(1 +o(1))p \log \log(n/p)$ bits of space~\cite{flajolet2007hyperloglog} for a set of cardinality $n$, making it an extremely memory-efficient algorithm to estimate cardinalities of large sets.

The expectation and the variance of $E$  are given in Theorem 1 of \cite{flajolet2007hyperloglog}, and will be used later in the paper for obtaining  performance guarantees of our algorithms:

\begin{definition}[Ideal multiset~\cite{flajolet2007hyperloglog}] An {\it ideal} multiset of cardinality $n$ is a sequence obtained by arbitrary replications and permutations applied to $n$ uniform identically distributed random variables over the real interval [0, 1].
\end{definition}

\begin{theorem}[from \cite{flajolet2007hyperloglog}]\label{thm:hyperloglogthm1}
Let the algorithm \textsc{Hyperloglog} be applied to an ideal multiset of (unknown) cardinality $n$, using $p \geq 3$ registers, and let $E$ be the resulting cardinality estimate. 
\begin{enumerate}
    \item[(i)] The estimate $E$ is asymptotically almost unbiased in the sense that, as $n \rightarrow \infty$,
        \begin{align}
            \frac{1}{n} \E(E) = 1 + \delta_1(n) + o(1), \text{ where }|\delta_1(n)| < 5 \cdot 10^{-5} \text{ as soon as }p \geq 2^4.
        \end{align}
    \item[(ii)] The standard error defined as $\frac{1}{n}\sqrt{\Var(E)}$ satisfies as $n \rightarrow \infty$,
        \begin{align}
            \frac{1}{n} \sqrt{\Var(E)} = \etam + \delta_2(n) + o(1), \text{ where }|\delta_2(n)| < 5 \cdot 10^{-4} \text{ as soon as }p \geq 2^4,
        \end{align}
    the constants $\beta_p$ being bounded, with $\beta_{16} = 1.106, \beta_{32} = 1.070, \beta_{64} = 1.054, \beta_{128} = 1.046,$ and $\beta_\infty = \sqrt{3\log(2) -  1} \approx 1.03896.$
\end{enumerate}
\end{theorem}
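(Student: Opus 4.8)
The plan is to exploit the product structure that arises after Poissonization, to analyze the single random sum $Z := \sum_{j=1}^p 2^{-M[j]}$ appearing in the denominator of the estimator, and then to transfer the Poissonized estimates back to the fixed-$n$ model. Since $E = \alpha_p p^2 / Z$ is a deterministic, monotone function of $Z$, essentially all of the difficulty is concentrated in understanding the first two moments of $Z$, after which claims (i) and (ii) follow by propagating these through the map $z \mapsto \alpha_p p^2 / z$.

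First I would Poissonize: replace the fixed cardinality $n$ by a $\mathrm{Poisson}(\lambda)$ number of ideal items with $\lambda = n$. Because the hash distributes items independently and uniformly over the $p = 2^b$ registers, under the Poisson model the registers become mutually independent and register $j$ receives a $\mathrm{Poisson}(\lambda/p)$ number of items. The value $M[j]$ is the largest rank (position of the leading one) among the items landing in register $j$, and a direct computation gives $\Pr(M[j] \le k) = \exp\!\big(-(\lambda/p)\,2^{-k}\big)$. From this I can express $\E\big[2^{-M[1]}\big]$ and $\E\big[2^{-2M[1]}\big]$ as explicit sums over $k$, and by independence obtain $\E[Z] = p\,\E\big[2^{-M[1]}\big]$ and $\Var[Z] = p\,\Var\big[2^{-M[1]}\big]$.

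Next I would extract the $\lambda \to \infty$ asymptotics of these sums using a Mellin transform. The sums over $k$ of terms such as $2^{-k}\exp\!\big(-(\lambda/p)2^{-k}\big)$ are harmonic sums whose Mellin transform factors into a smooth part times a $\Gamma$-factor; inverting the transform yields a main term proportional to $\lambda/p$ plus a tiny term that oscillates periodically in $\log_2(\lambda/p)$. The main term fixes the normalization: the integral defining $\alpha_p$ is precisely what makes $\alpha_p p^2 / \E[Z] \sim \lambda$, so that $E$ is asymptotically unbiased, while the oscillating term is exactly the source of $\delta_1(n)$ and $\delta_2(n)$. Bounding its amplitude by the stated constants ($5\cdot 10^{-5}$ and $5\cdot 10^{-4}$) reduces to estimating the Fourier coefficients (residues) of this periodic function, which become negligibly small once $p \ge 2^4$.

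Having controlled $Z$ to second order, I would pass from $Z$ to $E = \alpha_p p^2 / Z$. Because $Z$ concentrates around its mean with coefficient of variation of order $1/\sqrt{p}$, a second-order Taylor expansion of $1/Z$ about $\E[Z]$ gives $\E[E]$ and $\Var[E]$ in terms of $\E[Z]$ and $\Var[Z]$; this is where the leading constant $\beta_p$ (and its limit $\sqrt{3\log 2 - 1}$) appears in the standard error $\etam$. The final, and I expect hardest, step is de-Poissonization: transferring the Poisson-model estimates to the fixed-$n$ model. This requires an analytic de-Poissonization argument, verifying that the relevant Poisson transform is sufficiently smooth and that the extra variance introduced by Poissonizing is of lower order. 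It is precisely here that the $o(1)$ error terms and the uniform control of the fluctuations $\delta_1(n), \delta_2(n)$ must be established rigorously rather than heuristically, and I would expect the bulk of the technical effort to go into this transfer and into the residue estimates of the preceding paragraph.
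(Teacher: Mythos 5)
This paper does not actually prove Theorem~\ref{thm:hyperloglogthm1}: it is imported verbatim from \cite{flajolet2007hyperloglog} and used as a black box, so the proof you should be measured against is the original Flajolet--Fuchs--Gandouet--Meunier analysis. Your scaffolding does match that proof in outline: Poissonization (so that the $p$ registers become independent, with $\Pr(M[j]\le k)=\exp(-(\lambda/p)2^{-k})$), Mellin-transform asymptotics of the resulting harmonic sums with the oscillating terms accounting for $\delta_1,\delta_2$, and an analytic de-Poissonization to return to fixed $n$. The gap is in the middle step, and it is fatal rather than cosmetic.

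You propose to compute $\E[Z]$ and $\Var[Z]$ for $Z=\sum_{j}2^{-M[j]}$ and then push these through $z\mapsto\alpha_p p^2/z$ by a second-order Taylor expansion, on the grounds that $Z$ concentrates with coefficient of variation $O(1/\sqrt{p})$. But the theorem is an asymptotic statement in $n$ with $p$ \emph{fixed}: the coefficient of variation of $Z$ tends to a constant of order $1/\sqrt{p}$, not to $0$, so a second-order expansion leaves relative errors of order $1/p^{2}$ (and your zeroth-order calibration claim is off by $\Theta(1/p)$), which for $p=16$ dwarfs the asserted $|\delta_1(n)|<5\cdot10^{-5}$. Concretely, your statement that the integral defining $\alpha_p$ ``is precisely what makes $\alpha_p p^2/\E[Z]\sim\lambda$'' is false: $\alpha_p$ is calibrated so that $\alpha_p p^2\,\E[1/Z]\sim\lambda$, and $\E[1/Z]\cdot\E[Z]=1+\Theta(1/p)$, since $1/\E[Z]\neq\E[1/Z]$. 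Numerically, $\alpha_\infty=1/(2\ln 2)\approx0.7213$ while $\alpha_{16}\approx0.673$, a discrepancy of roughly $7\%$ that your scheme would wrongly report as bias. The original proof avoids this entirely by computing $\E[1/Z]$ and $\E[1/Z^{2}]$ \emph{exactly} in the Poisson model via the integral representations $1/Z=\int_0^\infty e^{-tZ}\,dt$ and $1/Z^{2}=\int_0^\infty t\,e^{-tZ}\,dt$, which factor over the independent registers as $\E[e^{-tZ}]=\big(\E[e^{-t2^{-M}}]\big)^{p}$; Mellin asymptotics of this $p$-fold product is exactly what produces $\int_0^\infty\big(\log_2\tfrac{2+u}{1+u}\big)^{p}\,du$, i.e.\ the stated $\alpha_p$, and likewise yields the finite-$p$ constants $\beta_{16}=1.106$, $\beta_{32}=1.070$, etc. Unless you replace the Taylor step by such an exact computation of $\E[1/Z]$ and $\E[1/Z^{2}]$ (valid to all orders in $1/p$), you can recover neither the $5\cdot10^{-5}$ bias bound nor the exact $\beta_p$ values in part (ii).
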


\subsection{\hb algorithm}\label{sec:hb_algorithm}
The \hb algorithm, introduced in \cite{boldi2013core}, estimates the size of the ball consisting of nodes within graph distance $r$ around a center node using HyperLogLog counters. The algorithm is an adaptation of the \hanf algorithm \cite{boldi2011hyperanf}, which is based on the fact that the nodeball around node $v$ with radius $r$, $\mathcal{B}_r(v)$, can be found iteratively:

\begin{definition}[Nodeball]\label{def:nodeball}
    The nodeball $\mathcal{B}_r(v)$ consists of every node in a ball of radius $r$ around node $v$. Define $\mathcal{B}_0(v) = \{v\}$. For $r>1$, define:
    \begin{align}
        \mathcal{B}_r(v) = \bigcup_{w: \{v, w\} \in E} \mathcal{B}_{r-1}(v) \cup \mathcal{B}_{r-1}(w).
    \end{align}
\end{definition}

The \hb algorithm uses one HyperLogLog counter per node and for each iteration, the counters of the neighbours of this node are added to the node's own counter. After each iteration $r$, the size of this counter is calculated, which equals the estimator of  $|{\mathcal{B}}_{r+1}(v)|$. By Theorem~\ref{thm:hyperloglogthm1}, every estimator is almost unbiased and has a relative error of at most $\beta_p/\sqrt{p}$, for $\beta_p < 1.046$ as soon as every HyperLogLog counter has $p > 128$ registers. This algorithm particularly excels at its small memory usage and the fact that the size of nodeballs around all nodes are found simultaneously. The pseudocode of the \hb algorithm is provided in Algorithm~\ref{alg:hyperball} {in Appendix} \ref{sec:appendix_algorithms}.

\subsection{\eb algorithm}
The \hb algorithm naturally extends from counting nodes to counting  edges that can be reached within radius $r$ around a node \cite{essay79108}. {The corresponding edgeballs are defined as follows:}

\begin{definition}[Edgeball]
    The edgeball $\mathcal{E}_0(v)$ consists of every edge incident to node $v$. Then, for $r > 1$:
    \begin{align}
    \label{eq:edgeball-1}
        \mathcal{E}_r(v) = \bigcup_{w: \{v, w\} \in E} \mathcal{E}_{r-1}(v) \cup \mathcal{E}_{r-1}(w).
    \end{align}
\end{definition}
Equivalently, we can rewrite (\ref{eq:edgeball-1}) as 
\begin{align}
    \label{eq:edgeball-2}
        \mathcal{E}_r(v) = \left\{\{x,y\}: x\in \svr\right\}.
    \end{align}

The \eb algorithm gives an approximation of the number of edges around a node after $r$ iterations, $|{\mathcal{E}}_{r+1}(v)|$. Compared to the \hb algorithm, the only change is in the initialisation phase (Algorithm \ref{alg:hyperball}), since we now count edges instead of nodes. This new initialisation is formalised in Algorithm~\ref{alg:hyperedgeball}. 

\begin{algorithm}[htb]
\begin{algorithmic}[1]
\State$c[-]$, an array of \textit{n} empty HyperLogLog counters
\State

\For{\textbf{each} $v \in V$} 
	\For{\textbf{each} $e \in \{\{v, w\} \in E\}$}
		\State \textsc{Add}$(c[v],w)$
	\EndFor
    \State write $\langle v, c[v] \rangle$ to disk
    \State \textbf{return} \textsc{Size}($c[v]$), which estimates $|\mathcal{E}_0(v)|$
\EndFor
\State
\State $(\widehat{|\mathcal{E}_r|})_{r\geq 1} = $ \textsc{CountBall}($c$)
\end{algorithmic}
\caption{The \eb algorithm. The \textsc{Add} and \textsc{Size} functions of Algorithm \ref{alg:hyperloglog} and the \textsc{CountBall} function of Algorithm \ref{alg:hyperball} are used to find the edgeball estimators.}
\label{alg:hyperedgeball}
\end{algorithm}
\section{Algorithms}\label{sec:algorithm}
We will now show that we can easily extend HyperBall-type algorithms to counting arbitrary, more complex patterns than edges or nodes. In particular, we will show that we can apply HyperBall-type algorithms to approximate three important clustering measures: conductance, triangles, and transitivity.

\subsection{Conductance}\label{sec:directed_edges}

The first quantity of interest that we investigate is conductance. The conductance of a graph $G$ is defined as follows:
\begin{definition}[Conductance]\label{def:conductance}
    For a graph $G = (V, E)$ with $n = |V|$ and $m = |E|$, the conductance of a subgraph $S \subset G$ is:
    \begin{align}
        \phi(S) = \frac{|\delta(S)|}{\min{(\vol(S), 2m - \vol(S))}},
    \end{align}
    where $	\delta(S) = \{\{x, y\} \in E| x \in S, y \notin S\}$ is the boundary of $S$, and $\vol(S)$ is  the volume of $S$ equal to the sum of the degrees of the nodes in subgraph $S$.
\end{definition}
In the rest of this paper we use a simplified definition of the conductance, as we assume that the graphs that we analyse are non-empty and the volume of the subgraphs that we analyze is smaller than the volume of its complement:
\begin{align}
    \phi\big(S\big) &= \frac{|\delta\big(S\big)|}{\vol\big(S\big)}.\label{eq:simplifiedconductance}
\end{align}

{Next, we wish to provide a memory-efficient way of estimating conductance of ball subgraphs. To this end, it seems natural to combine \hb and \eb from Section~\ref{sec:preliminaries}. However, this turns out to be insufficient because the low memory usage implies that we cannot identify the edges of $\delta(\svr)$. To overcome this problem, we transform our (undirected) graph into a directed variant of the graph where every undirected edge becomes two directed edges, and we introduce directed edgeballs:}

\begin{definition}[Out-edgeball]\label{def:out-edgeball}
    The out-edgeball with radius $r$ around node $v$ is defined as follows:
    \begin{align}
        \mathcal{E}_r^{-}(v) &= \{(x,y) \in E \hspace{0.05cm}\big|\hspace{0.05cm} x \in \mathcal{B}_r(v)\}.
    \end{align}
\end{definition}

Similarly, we can define an in-edgeball:
\begin{definition}[In-edgeball]\label{def:in-edgeball}
    The in-edgeball with radius $r$ around node $v$ is defined as follows:
    \begin{align}
        \mathcal{E}_r^{+}(v) &= \{(x,y) \in E \hspace{0.05cm}\big|\hspace{0.05cm} y \in \mathcal{B}_r(v) \}.
    \end{align}
\end{definition}

The different kinds of edgeballs are illustrated in Figure \ref{tikz:symmetricdirectedvsundirectededges}. For our purposes, it is important to notice that  when an edge is on the boundary between a node inside and a node outside the ball of radius $r$, this edge belongs only to the out-edgeball, while all other edges, that have both endpoints of the edge are in the nodeball $\mathcal{B}_r(v)$, belong to both in- and out-edgeball. This is exactly why the directed edgeballs are helpful for estimating conductance. 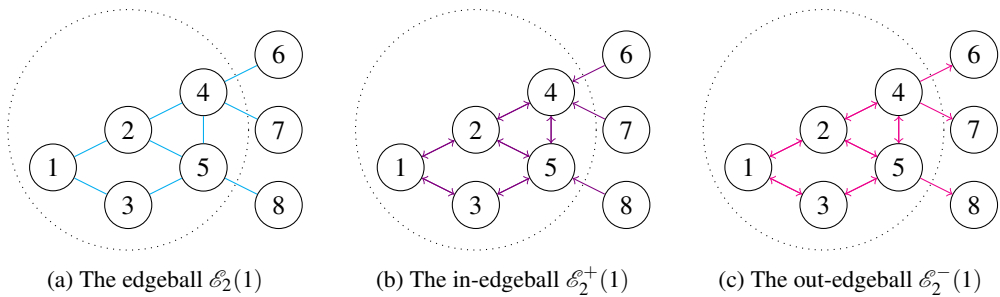
\begin{figure}[ht]
	\begin{subfigure}{0.30\textwidth}
    	\centering
    	\begin{tikzpicture}
    	\node[circle,draw, minimum size=0.5cm] (v11) at (-6.5,1) {$1$};
    	\node[circle,draw, minimum size=0.5cm] (v12) at (-5.5,1.5) {$2$};
    	\node[circle,draw, minimum size=0.5cm] (v13) at (-5.5,0.5) {$3$};
    	\node[circle,draw, minimum size=0.5cm] (v14) at (-3.5,1.5) {$7$};
    	\node[circle,draw, minimum size=0.5cm] (v15) at (-4.5,2) {$4$};
    	\node[circle,draw, minimum size=0.5cm] (v17) at (-4.5,1) {$5$};
    	\node[circle,draw, minimum size=0.5cm] (v18) at (-3.5,0.5) {$8$};
    	\node[circle,draw, minimum size=0.5cm] (v110) at (-3.5,2.5) {$6$};
    	
    	\draw[cyan]  (v11) edge (v12);
    	\draw[cyan]  (v12) edge (v15);
    	\draw[cyan]  (v12) edge (v17);
    	\draw[cyan]  (v13) edge (v17);
    	\draw[cyan]  (v11) edge (v13);
    	\draw[cyan]  (v15) edge (v17);
    	\draw[cyan]  (v15) edge (v110);
    	\draw[cyan]  (v15) edge (v14);
    	\draw[cyan]  (v17) edge (v18);
    	
    	\draw[dotted]  (-5.5,1.5) ellipse (1.6 and 1.6);
    \end{tikzpicture}
    \caption{The edgeball $\mathcal{E}_2(1)$}
    \label{tikz:edgeball}
    \end{subfigure}
    \begin{subfigure}{0.30\textwidth}
    	\centering
    	\begin{tikzpicture}
    	    	\node[circle,draw, minimum size=0.5cm] (v1) at (-6.5,1) {$1$};
    	\node[circle,draw, minimum size=0.5cm] (v2) at (-5.5,1.5) {$2$};
    	\node[circle,draw, minimum size=0.5cm] (v3) at (-5.5,0.5) {$3$};
    	\node[circle,draw, minimum size=0.5cm] (v7) at (-3.5,1.5) {$7$};
    	\node[circle,draw, minimum size=0.5cm] (v4) at (-4.5,2) {$4$};
    	\node[circle,draw, minimum size=0.5cm] (v5) at (-4.5,1) {$5$};
    	\node[circle,draw, minimum size=0.5cm] (v8) at (-3.5,0.5) {$8$};
    	\node[circle,draw, minimum size=0.5cm] (v6) at (-3.5,2.5) {$6$};
    	
    	\draw[violet, ->]  (v1) edge (v2);
    	\draw[violet, ->]  (v2) edge (v1);
    	\draw[violet, ->]  (v2) edge (v4);
    	\draw[violet, ->]  (v2) edge (v5);
    	\draw[violet, ->]  (v3) edge (v5);    	
    	\draw[violet, ->]  (v5) edge (v3);    	
	\draw[violet, ->]  (v5) edge (v4);    	
    	\draw[violet, ->]  (v5) edge (v2);
    	\draw[violet, ->]  (v1) edge (v3);
    	\draw[violet, ->]  (v3) edge (v1);
    	\draw[violet, ->]  (v4) edge (v5);
    	\draw[violet, ->]  (v4) edge (v2);
    	\draw[violet, ->]  (v6) edge (v4);
    	\draw[violet, ->]  (v7) edge (v4);
    	\draw[violet, ->]  (v8) edge (v5);
    	
    	\draw[dotted]  (-5.5,1.5) ellipse (1.6 and 1.6);
    \end{tikzpicture}
    \caption{The in-edgeball $\mathcal{E}^+_2(1)$}
    \label{tikz:in-edgeball}
    \end{subfigure}
    \begin{subfigure}{0.30\textwidth}
    	\centering
    	\begin{tikzpicture}
    	    	\node[circle,draw, minimum size=0.5cm] (v1) at (-6.5,1) {$1$};
    	\node[circle,draw, minimum size=0.5cm] (v2) at (-5.5,1.5) {$2$};
    	\node[circle,draw, minimum size=0.5cm] (v3) at (-5.5,0.5) {$3$};
    	\node[circle,draw, minimum size=0.5cm] (v7) at (-3.5,1.5) {$7$};
    	\node[circle,draw, minimum size=0.5cm] (v4) at (-4.5,2) {$4$};
    	\node[circle,draw, minimum size=0.5cm] (v5) at (-4.5,1) {$5$};
    	\node[circle,draw, minimum size=0.5cm] (v8) at (-3.5,0.5) {$8$};
    	\node[circle,draw, minimum size=0.5cm] (v6) at (-3.5,2.5) {$6$};
    	
    	\draw[magenta, ->]  (v1) edge (v2);
    	\draw[magenta, ->]  (v2) edge (v1);
    	\draw[magenta, ->]  (v2) edge (v4);
    	\draw[magenta, ->]  (v2) edge (v5);
    	\draw[magenta, ->]  (v3) edge (v5);    	
    	\draw[magenta, ->]  (v5) edge (v3);    	
	\draw[magenta, ->]  (v5) edge (v4);    	
    	\draw[magenta, ->]  (v5) edge (v2);
    	\draw[magenta, ->]  (v1) edge (v3);
    	\draw[magenta, ->]  (v3) edge (v1);
    	\draw[magenta, ->]  (v4) edge (v5);
    	\draw[magenta, ->]  (v4) edge (v2);
    	\draw[magenta, ->]  (v4) edge (v6);
    	\draw[magenta, ->]  (v4) edge (v7);
    	\draw[magenta, ->]  (v5) edge (v8);
    	
    	\draw[dotted]  (-5.5,1.5) ellipse (1.6 and 1.6);
    \end{tikzpicture}
    \caption{The out-edgeball $\mathcal{E}^-_2(1)$}
    \label{tikz:out-edgeball}
    \end{subfigure}
    \caption{Three methods for counting edges in an undirected graph.}
    \label{tikz:symmetricdirectedvsundirectededges}
\end{figure}
This is formally stated in the following  theorem:

\begin{theorem}\label{th:edgeboundary}
	For an undirected ball subgraph $S_r(v)$ it holds that:
	\begin{align}
		|\delta\big(\svr\big)| &= 2|\mathcal{E}_{r}(v)| - |\mathcal{E}^{-}_{r}(v)|,\label{eq:edgeboundarysubgraph}\\
			\text{vol}\big(\svr\big) & = |\mathcal{E}^{-}_r(v)| \label{eq:volumesubgraph}.
	\end{align}
\end{theorem}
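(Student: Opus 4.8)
The plan is to classify the edges touching the ball $\svr$ into two groups and to count, for each group, how many directed copies survive in the out-edgeball. First I would write $\mathcal{E}_r(v)$, characterized by (\ref{eq:edgeball-2}) as the set of all undirected edges with at least one endpoint in $\svr$, as the disjoint union of the \emph{interior} edges $I = \{\{x,y\}\in E : x,y \in \svr\}$ and the \emph{boundary} edges, which by Definition~\ref{def:conductance} are exactly $\delta(\svr) = \{\{x,y\}\in E : x\in\svr,\ y\notin\svr\}$. Since every edge with an endpoint in $\svr$ has either both endpoints inside (interior) or exactly one inside (boundary), this is a genuine partition and $\edge = |I| + |\delta(\svr)|$.

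For the volume identity (\ref{eq:volumesubgraph}) I would count the out-edgeball by its tails. By Definition~\ref{def:out-edgeball}, $\mathcal{E}_r^{-}(v)$ consists of all directed edges $(x,y)$ with $x\in\svr$, and for a fixed tail $x$ the number of such arcs equals $\deg(x)$, because the orientation that replaces every undirected edge by its two directed copies places exactly $\deg(x)$ arcs out of $x$. Summing over $x\in\svr$ then gives $\edgedirected = \sum_{x\in\svr}\deg(x) = \vol(\svr)$, which is (\ref{eq:volumesubgraph}).

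For the boundary identity (\ref{eq:edgeboundarysubgraph}) I would re-count $\edgedirected$ using the partition above rather than by tails. An interior edge $\{x,y\}\in I$ contributes both arcs $(x,y)$ and $(y,x)$ to the out-edgeball, whereas a boundary edge $\{x,y\}\in\delta(\svr)$ with $x\in\svr$, $y\notin\svr$ contributes only the single arc $(x,y)$, since the reversed arc $(y,x)$ has its tail $y$ outside $\svr$ and is excluded. Hence $\edgedirected = 2|I| + |\delta(\svr)|$. Combining this with $\edge = |I| + |\delta(\svr)|$ and eliminating $|I|$ yields $2\edge - \edgedirected = 2\big(|I|+|\delta(\svr)|\big) - \big(2|I|+|\delta(\svr)|\big) = |\delta(\svr)|$, which is (\ref{eq:edgeboundarysubgraph}).

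The only point that requires care, and the place where the directed construction does the real work, is the asymmetric treatment of boundary edges: a straddling edge is counted once in $\mathcal{E}_r(v)$ and once in $\mathcal{E}_r^{-}(v)$, while an interior edge is counted once in $\mathcal{E}_r(v)$ but twice in $\mathcal{E}_r^{-}(v)$. I would therefore verify carefully that $y\notin\svr$ genuinely removes the arc $(y,x)$ from the out-edgeball, as this single observation is what separates the interior count $2|I|$ from the boundary count $|\delta(\svr)|$ and makes the factor $2$ in front of $\edge$ cancel the interior contribution while leaving the boundary term standing after the subtraction. Everything else is bookkeeping, so I expect no further obstacle.
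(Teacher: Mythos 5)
Your proof is correct and rests on exactly the same double-counting facts as the paper's own proof: the out-edgeball $\mathcal{E}^-_r(v)$ contains two arcs for every interior edge and one arc for every boundary edge, while the undirected edgeball $\mathcal{E}_r(v)$ counts each such edge once, and the volume identity follows by counting out-arcs by their tails. The only difference is organizational — the paper runs a single chain of indicator-sum manipulations (adding and subtracting $\vol\big(\svr\big)$), whereas you set up the two relations $|\mathcal{E}_r(v)| = |I| + |\delta\big(\svr\big)|$ and $|\mathcal{E}^-_r(v)| = 2|I| + |\delta\big(\svr\big)|$ and eliminate $|I|$ — which is a purely cosmetic distinction.
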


\begin{proof}
Denote by ${\bf 1}\{A\}$ the indicator of $A$, and let $E'$ be the set of directed edges obtained from the edge set $E$ by replacing each undirected edge $\{x,y\}$ by two directed edges $(x,y)$ and $(y,x)$. By definition of the volume of a set of nodes, we obtain
\begin{align*}
\text{vol}\big(\svr\big) & =\sum_{x\in\svr}\sum_{y\in V}{\bf 1}\{\{x,y\}\in E\}= \sum_{x\in\svr}\sum_{y\in V}{\bf 1}\{(x,y)\in E'\} = |\mathcal{E}^{-}_r(v)|,  
\end{align*}
which proves (\ref{eq:volumesubgraph}). Next, using (\ref{eq:volumesubgraph}), we write
\begin{align*}
    |\delta\big(\svr\big)|&= \sum_{x\in \svr}\sum_{y\notin\svr}{\bf 1}\{\{x, y\} \in E\}\\
    &= \sum_{x\in \svr}\sum_{y\notin\svr}{\bf 1}\{\{x, y\} \in E\} +\text{vol}\big(\svr\big)-\text{vol}\big(\svr\big)\\
    &= \sum_{x\in \svr}\sum_{y\notin\svr}{\bf 1}\{\{x, y\} \in E\} +\sum_{x\in \svr}\sum_{y\in V}{\bf 1}\{\{x, y\} \in E\} -|\mathcal{E}^{-}_r(v)| \\
    &= 2\sum_{x\in \svr}\sum_{y\notin\svr}{\bf 1}\{\{x, y\} \in E\} +\sum_{x\in \svr}\sum_{y\in \svr}{\bf 1}\{\{x, y\} \in E\}-|\mathcal{E}^{-}_r(v)|. 
\end{align*}
The last expression equals to the right-hand side of (\ref{eq:edgeboundarysubgraph}) by the definition of $\mathcal{E}_{r}(v)$  and the fact that in the second term each undirected edge is counted twice.  \end{proof}
\medskip

Theorem \ref{th:edgeboundary} suggests the following estimator $\hat{\phi}\big(\svr\big)$ of  the conductance of $\svr$:
\begin{align}
	\hat{\phi}\big(\svr\big) = \frac{|{\delta}\big(\svr\big)|}{{\vol}\big(\svr\big)} = \frac{2\estedge - \estedgedirected}{\estedgedirected} = 2 \frac{\estedge}{\estedgedirected} - 1, \label{eq:conductanceestimate}
\end{align}
where $\estedge$ and $\estedgedirected$ are the estimates of $|\mathcal{E}_r(v)|$ and $|\mathcal{E}^-_r(v)|$, respectively. 

\begin{algorithm}[htb]
\begin{algorithmic}[1]
\State$c[-]$, an array of \textit{n} empty HyperLogLog counters
\State

\For{\textbf{each} $v \in V$} 
	\For{\textbf{each} $e \in \{(v, w) \in E'\}$}
		\State \textsc{Add}$(c[v],w)$
	\EndFor
	\State write $\langle v, c[v] \rangle$ to disk
    \State \textbf{return} \textsc{Size}($c[v]$), which estimates $|\mathcal{E}^-_0(v)|$
\EndFor
\State
\State $(\widehat{|\mathcal{E^-}_r|})_{r\geq 1} = $ \textsc{CountBall}($c$)
\end{algorithmic}
\caption{The directed \eb algorithm. The \textsc{Add} and \textsc{Size} functions of Algorithm \ref{alg:hyperloglog} and the \textsc{CountBall} function of Algorithm \ref{alg:hyperball} is used to find the out-edgeball estimators.}
\label{alg:directedhyperedgeball}
\end{algorithm}

\subsection{Triangles and wedges}
\label{sec:triangleandwedgecount}
We now present our algorithm that counts the number of triangles within a ball of radius $r$ around node $v$. We denote this number by ${\Delta}_r(v)$, and its estimator by $\hat{\Delta}_r(v)$. The idea is to obtain $\hat{\Delta}_r(v)$ using the same \hb algorithm as for counting edges or nodes, but we initialise the counter with triangles instead of nodes or edges. For this initialisation, we assign a unique hash value to each triangle in the graph. This can be done using algorithms for exact triangle counting such as \textit{compact-forward} \cite{latapy2006theory} or \textit{edge-iterator} \cite{batagelj2002generalized}. In Algorithm~\ref{alg:triangleballalgorithm}, we give an example of how this can be implemented.  

\begin{algorithm}[htb]
\begin{algorithmic}[1]
\State$c[-]$, an array of \textit{n} HyperLogLog counters
\State \textbf{triangle}, \textbf{wedge}: booleans that express whether triangles or wedges are counted
\State

\For{each $v \in V$} 
	\For{each $(i, v) \in E$}
	   \For{each $(j, v) \in E$}
	            \If{$(i, j) \in E$ \textbf{and triangle}}{}\label{line:triangleball}
	            \State \textsc{Add}$(c[v],(v, i, j))$
	            \ElsIf{\textbf{wedge}}
	            \State \textsc{Add}$(c[v],(v, i, j))$
	            \EndIf
	    \EndFor
	\EndFor
	\State write $\langle v, c[v] \rangle$ to disk
	\State \textbf{return} \textsc{Size}($c[v]$), which estimates $|\Delta_0(v)|$ or $|w_0(v)|$
\EndFor
\State
\If{\textbf{triangle}}{}
    \State $(\hat{\Delta}_r)_{r\geq 1} = $ \textsc{CountBall}($c$)
    \ElsIf{\textbf{wedge}}
    \State $(\hat{t}_r)_{r\geq 1} = $ \textsc{CountBall}($c$)
\EndIf
\end{algorithmic}
\caption{Triangle and wedge ball algorithm. The \textsc{Add} and \textsc{Size} functions of Algorithm \ref{alg:hyperloglog} and the \textsc{CountBall} function of Algorithm \ref{alg:hyperball} is used to find the triangle or wedge cardinality estimates.}
\label{alg:triangleballalgorithm}
\end{algorithm}

Now denote by $w_r(v)$ the number of wedges in $\mathcal{B}_r(v)$. Since wedges are open triangles, an estimator $\hat w_r(v)$ of $w_r(v)$ can be found in exactly the same way as $\hat\Delta_r(v)$, but in the initialisation we add a wedge $\{i,v,j\}$ to a counter if $i,j \in V$ are neighbours of $v$. (Note that in line \ref{line:triangleball} of Algorithm \ref{alg:triangleballalgorithm} we verify that $i$ and $j$ are neighbours; this is needed for the initialisation of the counter of triangles).

\subsection{Extension to counting of arbitrary induced subgraphs}
The algorithms above for counting triangles and wedges easily extend to counting arbitrary induced connected subgraphs, called {\it graphlets}, within balls $\mathcal{B}_r(v)$. For that, in the initialisation phase, we need to count the graphlets that involve node $v$ for all $v\in V$, and assign a unique hash value to each of these graphlets. After that, we can run the \hb algorithm to count graphlets in ball subgraphs. This approach potentially can indicate parts of networks with unusual quantities of particular graphlets. However, applying \hb to general subgraph patterns also comes with important difficulties.  First,  \hb  counts nodes/edges/graphlets in ball subgraphs, and therefore this approach cannot be easily extended to counting graphlets in subgraphs of any other form. Second, the initialisation phase presents a computational bottleneck because assigning a unique hash value to each graphlet supersedes the computationally demanding task of exact graphlet counting. 

\begin{algorithm}[htb]
\begin{algorithmic}[1]
\State$c[-]$, an array of \textit{n} HyperLogLog counters
\State

\For{each $v \in V$} 
   \If{$v \in $ graphlet}{}
	\State \textsc{Add}($c[v],$ graphlet)
	\State write $\langle v, c[v] \rangle$ to disk 
    \State \textbf{return} \textsc{Size}($c[v]$), which estimates the number of graphlets in $\svone$.
    \EndIf
\EndFor
\State
\State $($graphlet estimate$_r)_{r\geq 1}$ = \textsc{CountBall}($c$)
\end{algorithmic}
\caption{Graphlet ball algorithm. The \textsc{Add} and \textsc{Size} functions of Algorithm \ref{alg:hyperloglog} and the \textsc{CountBall} function of Algorithm \ref{alg:hyperball} is used to find the graphlet cardinality estimates.}
\label{alg:graphletballalgorithm}
\end{algorithm}

\section{Error bounds}\label{sec:performance}

\subsection{Error bounds for  the estimator of conductance}
We now introduce Theorem \ref{th:cheberrorbound} and \ref{th:vperrorbound} that give a lower and upper bound for the conductance estimator $\estphi$ from \eqref{eq:conductanceestimate} based on Chebyshev's inequality and \vp's inequality \cite{vysochanskij1980justification}.

\begin{theorem}[Chebyshev bound for the conductance estimator]\label{th:cheberrorbound}
For all $v\in V$, $r\ge 1$, the conductance estimator $\estphi$, as defined in \eqref{eq:conductanceestimate}, satisfies:
\begin{align}
    P\Bigg[\estphi \in \Bigg(\frac{1 - \epsilon}{1 + \gamma} \cdot \phie, \frac{1 + \epsilon}{1 - \gamma} \cdot \phie\Bigg)\Bigg] &\geq 1 - \eta^2 \Bigg(\frac{\edge^2}{p_1^2} + \frac{\edgedirected^2}{p_2^2}\Bigg),
\end{align}
with $\epsilon = \frac{p_1}{\edge} + \delta_1 + o_1\left(\edge\right)$, $\gamma = \frac{p_2}{\edgedirected} + \delta_1 + o_2\left(\edgedirected\right)$ and $p_1, p_2 > 0$, where $\delta_1 = 5 \cdot 10^{-5}$, and $o_1,o_2=o(1)$, as their argument goes to infinity.

\end{theorem}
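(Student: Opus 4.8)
The plan is to reduce the statement to two separate (one per HyperLogLog counter) applications of Chebyshev's inequality and then to propagate the two resulting relative errors through the ratio that defines $\estphi$. Write $X=\estedge$ and $Y=\estedgedirected$ for the two estimators, with true values $a=\edge$ and $b=\edgedirected$, so that by \eqref{eq:conductanceestimate} we have $\estphi = 2X/Y-1$, while Theorem~\ref{th:edgeboundary} gives $\phie = 2a/b-1$. First I would invoke Theorem~\ref{thm:hyperloglogthm1}: since each ball behaves as an ideal multiset of the corresponding cardinality, $\E(X)=a\big(1+\delta_1(a)+o(1)\big)$ with $|\delta_1|<5\cdot10^{-5}$, and $\sqrt{\Var(X)}=a\big(\etam+\delta_2(a)+o(1)\big)$, and identically for $Y$ with $b$ in place of $a$.

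Next I would apply Chebyshev's inequality to each estimator in isolation. For $X$, $P\big(|X-\E(X)|\ge p_1\big)\le \Var(X)/p_1^2$, and substituting the variance from Theorem~\ref{thm:hyperloglogthm1} gives a bound of the form $\eta^2 a^2/p_1^2$ once $\delta_2$ and the $o(1)$ term are absorbed into the error bookkeeping; the analogous statement holds for $Y$ with $p_2$ and $b$. On the complement of the first failure event, $X$ is within $p_1$ of $\E(X)$, and combining this deviation with the bias bound $|\delta_1|$ places $X$ in the window $\big(a(1-\epsilon),a(1+\epsilon)\big)$ with $\epsilon=p_1/a+\delta_1+o_1(a)$ exactly as defined; symmetrically $Y\in\big(b(1-\gamma),b(1+\gamma)\big)$. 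A union bound over the two failure events then yields the claimed success probability $1-\eta^2\big(a^2/p_1^2+b^2/p_2^2\big)$. This is precisely the step where treating each counter separately pays off: Chebyshev together with the union bound never requires the covariance $\Cov(X,Y)$, for which we have no estimate.

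The final step is the deterministic sandwich. On the good event, monotonicity of $t\mapsto X/t$ and of $s\mapsto s/Y$ gives $X/Y\in\big(\tfrac{a(1-\epsilon)}{b(1+\gamma)},\tfrac{a(1+\epsilon)}{b(1-\gamma)}\big)=\big(\tfrac{a}{b}\cdot\tfrac{1-\epsilon}{1+\gamma},\tfrac{a}{b}\cdot\tfrac{1+\epsilon}{1-\gamma}\big)$, and one then transports this interval through the affine map $t\mapsto 2t-1$ to obtain an interval for $\estphi$ around $\phie$. I expect this last manipulation to be the main obstacle: propagating a two-sided relative error through the additive shift ``$-1$'' in $\estphi=2X/Y-1$ does not leave the multiplicative factors $\tfrac{1-\epsilon}{1+\gamma}$ and $\tfrac{1+\epsilon}{1-\gamma}$ acting cleanly on $\phie$, so landing on the stated interval requires careful bookkeeping of the constant term (and, implicitly, the assumption that $\epsilon,\gamma<1$, so the denominators stay positive and the inequalities do not reverse). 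I would defer this algebra to the very end, after all probabilistic content is settled, since it is purely deterministic once we condition on the event $X\in\big(a(1-\epsilon),a(1+\epsilon)\big)$ and $Y\in\big(b(1-\gamma),b(1+\gamma)\big)$.
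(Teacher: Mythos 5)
Your probabilistic bookkeeping---Chebyshev applied to each counter in isolation, the bias $\delta_1$ absorbed into $\epsilon$ and $\gamma$, and a union bound over the two failure events so that no covariance is ever needed---is exactly the paper's argument, and you carry it out correctly. But the step you flag as ``the main obstacle'' and defer to the end is not bookkeeping: it is a genuine gap, and it cannot be closed. Write $X=\estedge$, $Y=\estedgedirected$, $s=\edge/\edgedirected$, $L=\frac{1-\epsilon}{1+\gamma}$, $U=\frac{1+\epsilon}{1-\gamma}$. The good event gives $X/Y\in(Ls,\,Us)$, hence
\begin{align*}
\estphi \;=\; 2\frac{X}{Y}-1 \;\in\; \big(2Ls-1,\; 2Us-1\big),
\end{align*}
whereas the theorem asserts membership in
\begin{align*}
\big(L(2s-1),\; U(2s-1)\big) \;=\; \big(2Ls-L,\; 2Us-U\big).
\end{align*}
Since $L<1<U$, the claimed interval is \emph{strictly contained} in the interval that the good event actually guarantees, so the implication you need runs the wrong way: knowing that $\estphi$ lies in the wider interval says nothing about the narrower one. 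No care with the constant term repairs this, and the mismatch is worst precisely when $\phie$ is small ($s$ close to $1/2$): there the claimed interval shrinks to a point (it is empty when $\phie=0$, making the claimed probability $0$), while the fluctuations of $2X/Y-1$ do not shrink at all. What your sandwich honestly proves is a multiplicative bound on $1+\estphi=2X/Y$, namely
\begin{align*}
P\Big[\,1+\estphi\in\big(L\,(1+\phie),\; U\,(1+\phie)\big)\Big] \;\geq\; 1-\eta^2\Bigg(\frac{\edge^2}{p_1^2}+\frac{\edgedirected^2}{p_2^2}\Bigg),
\end{align*}
equivalently an additive bound on $\estphi$ with one-sided errors $2s(1-L)$ and $2s(U-1)$---not the stated multiplicative bound on $\estphi$ itself.

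You should also know that the paper's own proof stumbles at exactly this spot: in the chain ending at \eqref{eq:intersection} it asserts that the event $\big\{(X/Y)/s\in(L,U)\big\}$ \emph{equals} the event $\big\{2X/Y-1\in\big(L(2s-1),\,U(2s-1)\big)\big\}$, i.e., it passes the multiplicative factors through the affine map $t\mapsto 2t-1$ as if they commuted with it. They do not, for the containment reason above. So your instinct in the final paragraph was right, and the honest conclusion is that this approach (yours and the paper's alike) proves the bound for $1+\estphi$ stated above, but does not prove Theorem~\ref{th:cheberrorbound} in the form in which it is stated.
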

The proof of this theorem is based on Theorem~\ref{thm:hyperloglogthm1}, and is given in Appendix~\ref{sec:proofs}.

When our estimators $\estedge$ and $\estedgedirected$ follow a unimodal distribution, we can also use \vp's (VP) inequality \cite{vysochanskij1980justification}:
\begin{theorem}[Vysochanskij-Petunin's inequality]\label{th:vpinequality}
    Assume that the random variable $X$ has a unimodal distribution with finite mean $\E(X)$ and variance $\Var(X) = \sigma^2$. Then, for $\lambda/\sigma > \sqrt{8/3}$,
    \begin{align}
        P\big(|X-\E(X)| \geq \lambda \big) \leq \frac{4\sigma^2}{9 \lambda^2}.
    \end{align}
\end{theorem}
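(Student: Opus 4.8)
The statement to prove is the \vp inequality (Theorem~\ref{th:vpinequality}), a classical sharpening of Chebyshev's bound that replaces the constant $1$ by $4/9$ under the extra assumption of unimodality. The plan is to reduce the two-sided, mean-centered tail to a moment optimization and solve it via the Khinchine representation of unimodal laws. First I would normalize by setting $Y = X - \E(X)$, so that $Y$ is unimodal about some mode $b$ (which need \emph{not} equal $0$), with $\E(Y) = 0$ and $\Var(Y) = \sigma^2$. The goal becomes: show $P(|Y| \ge \lambda) \le \tfrac{4\sigma^2}{9\lambda^2}$ whenever $\lambda > \sqrt{8/3}\,\sigma$. It is worth flagging immediately where the difficulty lies: unimodality most naturally controls deviations from the \emph{mode} $b$, whereas the inequality concerns deviations from the \emph{mean}, and the offset between them is precisely what obstructs a one-line argument.

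The main engine is the Khinchine representation: every law that is unimodal about $b$ can be written as $Y \stackrel{d}{=} b + UW$, where $U \sim \mathrm{Unif}[0,1]$ is independent of an auxiliary variable $W$. Under this representation the two constraints on $Y$ become moment constraints on $W$: from $\E(Y) = 0$ one gets $\E(W) = -2b$, and from $\Var(Y) = \sigma^2$ one gets $\E(W^2) = 3(\sigma^2 + b^2)$. Conditioning on $W$ and integrating out the uniform factor $U$ turns each one-sided tail into an explicit expectation of a bounded, piecewise-smooth function of $W$; concretely $P(Y \ge \lambda) = \E\big[(1 - (\lambda - b)/W)_+\,\mathbf{1}\{W>0\}\big]$ for the upper tail, with the symmetric expression for the lower tail. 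Thus $P(|Y| \ge \lambda)$ becomes a \emph{linear} functional of the law of $W$, subject to two linear moment constraints and the structural parameter $b$.

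I would then, for fixed offset $b$, maximize this linear functional over all laws of $W$ satisfying the two moment constraints. This is an infinite-dimensional linear program, and by the standard theory of moment problems its extreme points are measures supported on only a few atoms, so the maximizing tail can be written down explicitly as a function of $b$, $\lambda$, and $\sigma$. A final one-dimensional optimization over $b$ isolates the worst case and produces exactly the factor $4/9$ together with the admissibility threshold $\lambda > \sqrt{8/3}\,\sigma$, below which the extremal regime changes form. Care is needed to treat the upper and lower tails separately and bound their sum, because the nonzero offset $b$ breaks the symmetry between $P(Y \ge \lambda)$ and $P(Y \le -\lambda)$.

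The hard part is exactly this variational step. A tempting shortcut is to apply the classical Gauss inequality centered at the mode $\nu$ and then use $\{|X-\mu| \ge \lambda\} \subseteq \{|X-\nu| \ge \lambda - |\mu - \nu|\}$; but this is too lossy, since $\tau^2 := \E[(X-\nu)^2] = \sigma^2 + (\mu-\nu)^2$ and $|\mu-\nu|$ may be of order $\sigma$, so the triangle-inequality route degrades both the constant and the threshold and cannot recover $4/9$. Consequently a genuinely global optimization that balances the mode offset against the available second-moment budget is required, and the technical crux is identifying the extremal atomic distribution and verifying the boundary case that pins down the $\sqrt{8/3}$ threshold. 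Since this is a well-known classical result, in the paper it suffices to invoke it via \cite{vysochanskij1980justification}; the sketch above indicates the argument underlying that reference.
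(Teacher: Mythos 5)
The paper never proves Theorem~\ref{th:vpinequality}: it is imported as a classical result and justified solely by the citation \cite{vysochanskij1980justification}, so there is no internal proof to compare yours against, and your final conclusion --- that invoking the reference suffices for the paper's purposes --- matches exactly what the paper does. Beyond that, your sketch of the argument underlying the reference is sound in outline and in its stated computations: the Khinchine representation $Y \stackrel{d}{=} b + UW$ with $U \sim \mathrm{Unif}[0,1]$ independent of $W$, the induced moment constraints $\E(W) = -2b$ and $\E(W^2) = 3(\sigma^2 + b^2)$, and the tail identity $P(Y \ge \lambda) = \E\big[(1 - (\lambda - b)/W)_+\,\mathbf{1}\{W>0\}\big]$ are all correct, and recasting $P(|Y|\ge\lambda)$ as a linear functional of the law of $W$ under two moment constraints, maximized at atomic extreme points and then optimized over the mode offset $b$, is precisely the structure of the standard modern proofs of this inequality. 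Your remark that the tempting shortcut --- Gauss's inequality centered at the mode followed by a triangle-inequality shift to the mean --- degrades both the constant and the threshold and cannot recover $4/9$ is also correct and is the right thing to flag, since that is the error most people make when trying to prove this quickly. The one caveat: as a standalone proof your text is incomplete, because the decisive variational step (identifying the extremal few-atom laws of $W$, carrying out the optimization over $b$, and verifying the boundary case that produces the constant $4/9$ together with the threshold $\sqrt{8/3}$) is described but not executed. Given that the paper itself treats the theorem as a black box, this is acceptable, but you should label it a proof sketch rather than a proof.
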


By using this inequality instead of Chebyshev's inequality, we can obtain tighter error bounds than the ones in Theorem \ref{th:cheberrorbound}, given in the next theorem.

\begin{theorem}[\vp bound for the conductance estimator]\label{th:vperrorbound}
If $\estedge$ and $\estedgedirected$ have a unimodal distribution, then for all $v\in V$, $r\ge 1$, the conductance estimator $\estphi$, as defined in \eqref{eq:conductanceestimate}, satisfies:
\begin{align}
    P\Bigg[\estphi \in \Bigg(\frac{1 - \epsilon}{1 + \gamma} \cdot \phie, \frac{1 + \epsilon}{1 - \gamma} \cdot \phie\Bigg)\Bigg] &\geq 1 - \frac{4}{9}\eta^2 \Bigg(\frac{\edge^2}{\lambda_1^2} + \frac{\edgedirected^2}{\lambda_2^2}\Bigg),
\end{align}
with $\lambda_1 > \sqrt{8/3 \cdot \Var\big(\edge\big)}$, $\lambda_2 > \sqrt{8/3 \cdot  \Var\big(\edgedirected\big)}$, $\epsilon = \frac{\lambda_1}{\edge}  + \delta_1 + o_1\left(\edge\right)$ and $\gamma = \frac{\lambda_2}{\edgedirected} + \delta_1 + o_2\left(\edgedirected\right)$, where $\delta_1, o_1$ and $o_2$ are as in Theorem~\ref{th:cheberrorbound}.
\end{theorem}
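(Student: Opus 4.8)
The plan is to reuse the entire architecture of the proof of Theorem~\ref{th:cheberrorbound} and to change only the concentration inequality that controls the two underlying counters. The target interval here is identical to the one in Theorem~\ref{th:cheberrorbound}; only the failure probability on the right-hand side shrinks by the factor $4/9$, which is exactly the improvement that \vp's inequality (Theorem~\ref{th:vpinequality}) buys over Chebyshev's inequality. So the whole task is to check that VP may legitimately replace Chebyshev at each step, and then to re-run the bookkeeping.

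First I would introduce the two ``good'' events
\begin{align*}
A_1 = \{\,|\estedge - \E(\estedge)| < \lambda_1\,\}, \qquad A_2 = \{\,|\estedgedirected - \E(\estedgedirected)| < \lambda_2\,\}.
\end{align*}
On each counter separately I would apply Theorem~\ref{th:vpinequality}. This is where the two extra hypotheses of the statement enter: the unimodality assumption is precisely what makes VP applicable to $\estedge$ and $\estedgedirected$, and the side conditions $\lambda_1 > \sqrt{8/3\cdot\Var(\estedge)}$ and $\lambda_2 > \sqrt{8/3\cdot\Var(\estedgedirected)}$ are exactly the range restriction $\lambda/\sigma > \sqrt{8/3}$ under which VP holds. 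Under these, VP gives $P(A_1^c) \le \tfrac{4}{9}\Var(\estedge)/\lambda_1^2$ and $P(A_2^c) \le \tfrac{4}{9}\Var(\estedgedirected)/\lambda_2^2$, with the characteristic $4/9$ prefactor in place of the $1$ that Chebyshev would give.

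Next I would combine the two events. A union bound yields $P(A_1\cap A_2) \ge 1 - P(A_1^c) - P(A_2^c)$, and then I would invoke Theorem~\ref{thm:hyperloglogthm1}(ii) to replace $\Var(\estedge)$ by its leading-order value $\eta^2\,\edge^2$ (with $\eta = \beta_p/\sqrt{p}$), and likewise for $\estedgedirected$, the $\delta_2$ and $o(1)$ corrections being negligible to leading order as $\edge,\edgedirected\to\infty$. This turns the two tails into $\tfrac{4}{9}\eta^2\,\edge^2/\lambda_1^2$ and $\tfrac{4}{9}\eta^2\,\edgedirected^2/\lambda_2^2$, which is exactly the claimed right-hand side. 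On the event $A_1\cap A_2$ I would then convert the additive deviations into the multiplicative two-sided bounds $\edge(1-\epsilon) < \estedge < \edge(1+\epsilon)$ and $\edgedirected(1-\gamma) < \estedgedirected < \edgedirected(1+\gamma)$, using the near-unbiasedness of Theorem~\ref{thm:hyperloglogthm1}(i) to write $\E(\estedge) = \edge(1+\delta_1+o(1))$; this is precisely what produces $\epsilon = \lambda_1/\edge + \delta_1 + o_1(\edge)$ and $\gamma = \lambda_2/\edgedirected + \delta_1 + o_2(\edgedirected)$.

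The step I expect to be the most delicate is the final, purely deterministic one: passing from multiplicative control of the two counters to the stated interval for $\estphi$. The subtlety is that $\estphi = 2\,\estedge/\estedgedirected - 1$ is not a plain ratio but a ratio shifted by the constant $-1$, so the clean factor $\tfrac{1\pm\epsilon}{1\mp\gamma}$ really governs $\estedge/\estedgedirected$, and one must track carefully how the additive $-1$ (equivalently, the ``$+1$'' hidden in $\phie = 2\,\edge/\edgedirected - 1$) interacts with it before the bound can be written in the displayed form. I would handle this exactly as in the proof of Theorem~\ref{th:cheberrorbound}, whose deterministic argument applies verbatim since it asserts the identical interval; the VP version changes nothing in this last step. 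In short, the only genuinely new content relative to Theorem~\ref{th:cheberrorbound} is verifying the unimodality and $\sqrt{8/3}$ threshold conditions that license VP, after which the $4/9$ factor propagates automatically through the union bound.
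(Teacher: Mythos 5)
Your proposal matches the paper's own proof: the paper establishes this theorem by re-running the proof of Theorem~\ref{th:cheberrorbound} verbatim with the \vp inequality (Theorem~\ref{th:vpinequality}) substituted for Chebyshev's inequality, which is exactly your plan --- per-counter VP bounds justified by the unimodality and $\sqrt{8/3}$-threshold hypotheses, a union bound, replacement of the variances by $\eta^2\edge^2$ and $\eta^2\edgedirected^2$ via Theorem~\ref{thm:hyperloglogthm1}, and the same deterministic conversion to the stated interval for $\estphi$. Your remark that the additive $-1$ in $\estphi = 2\,\estedge/\estedgedirected - 1$ makes the final step delicate is apt (the paper's argument for Theorem~\ref{th:cheberrorbound} treats that conversion as an identity without comment), but since you resolve it exactly as the paper does, the two arguments coincide.
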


The proof of this theorem is identical to the proof of Theorem \ref{th:cheberrorbound}, but it uses the \vp inequality instead of the Chebyshev's inequality. In Section \ref{sec:results} we will show that the VP bound indeed holds in our numerical experiments by using a statistical test of unimodality \cite{hartigan1985dip}. In future research it will be interesting to identify general conditions under which the HyperLogLog counters produce estimators with a unimodal distribution.

\subsection{Error bounds of the estimator for the triangle count}
We again use the Chebyshev inequality in order to find a lower and upper error bound for the triangle count estimator $\esttriangle$ from Algorithm~\ref{alg:triangleballalgorithm}:
\begin{theorem}[Chebyshev bound for the triangle count estimator]\label{th:cheberrorboundtriangles}
For all $v\in V$, $r\ge 1$, the triangle estimator $\esttriangle$ satisfies:
\begin{align}
    P\Big[\esttriangle \in \Big(\E\big(\esttriangle\big) - a, \E\big(\esttriangle\big)& + a\Big)\Big] \geq 1 - \frac{\eta^2 \triangle^2 }{a^2},
\end{align}
for $a > 0$ and $\eta = \etam + \delta_2 + o_1\left(\triangle\right)$, where $\delta_2 = 5 \cdot 10^{-4}$ and $o_1 = o(1)$, as its argument goes to infinity.
\end{theorem}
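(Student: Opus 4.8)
The plan is to apply Chebyshev's inequality directly to $\esttriangle$ and then substitute the \hll variance from Theorem~\ref{thm:hyperloglogthm1}. Recall that Chebyshev's inequality states that for any random variable $X$ with finite variance and any $a > 0$,
\[
P\big(|X - \E(X)| \geq a\big) \leq \frac{\Var(X)}{a^2},
\]
equivalently $P\big(|X - \E(X)| < a\big) \geq 1 - \Var(X)/a^2$. Taking $X = \esttriangle$ reproduces exactly the event in the statement, so the entire task reduces to expressing $\Var\big(\esttriangle\big)$ in terms of $\triangle$ and $\eta$. Note that, since the bound is centred at $\E\big(\esttriangle\big)$ rather than at the true count $\triangle$, only the variance part (ii) of Theorem~\ref{thm:hyperloglogthm1} is needed, not the near-unbiasedness part (i).

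The key step is to identify $\esttriangle$ as the output of the \hll algorithm applied to the collection of triangles contained in $\svr$. By Algorithm~\ref{alg:triangleballalgorithm} each triangle receives its own hash value, and the \hb union operations merge counters without changing the set of distinct triangles represented; hence the cardinality being estimated is precisely $\triangle$. Under the standing assumption that the hash function yields independent uniform values on $[0,1]$ for distinct triangles, this collection constitutes an ideal multiset of cardinality $n = \triangle$. Theorem~\ref{thm:hyperloglogthm1}(ii), applied with $n = \triangle$, then gives
\[
\frac{1}{\triangle}\sqrt{\Var\big(\esttriangle\big)} = \etam + \delta_2 + o_1\big(\triangle\big) = \eta,
\]
where $\delta_2$ and $o_1$ are exactly the quantities appearing in the statement (the term $\delta_2(\triangle)$ of Theorem~\ref{thm:hyperloglogthm1} satisfying $|\delta_2(\triangle)| < 5\cdot 10^{-4}$, and the $o(1)$ contribution absorbed into $o_1$). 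Squaring yields $\Var\big(\esttriangle\big) = \eta^2\,\triangle^2$, and substituting this into the Chebyshev bound above completes the argument.

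The main obstacle I anticipate is justifying that the triangle-initialised counter genuinely meets the ideal-multiset hypothesis of Theorem~\ref{thm:hyperloglogthm1}, rather than the numerics, which are a single substitution. Concretely, one must argue that the repeated insertions of the same triangle --- both through the several nodes incident to it and through the iterative unions of \hb --- do not bias the estimate, and that modelling the distinct-triangle hash values as independent uniform variates is admissible. This is exactly the invariance the \hll estimator is built to provide: its value depends only on the set of distinct hashed elements and is unaffected by replications and reorderings, which is the same reasoning already relied upon for nodes in \hb and for edges in \eb. Once this identification is in place, the remainder of the proof is immediate.
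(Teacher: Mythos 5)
Your proposal is correct and follows essentially the same route as the paper: apply Chebyshev's inequality to $\esttriangle$ centred at its mean, then substitute the HyperLogLog variance from Theorem~\ref{thm:hyperloglogthm1}(ii) applied with $n = \triangle$. The only (harmless) imprecision is writing $\frac{1}{\triangle}\sqrt{\Var\big(\esttriangle\big)} = \eta$ as an exact equality: Theorem~\ref{thm:hyperloglogthm1}(ii) gives the standard error as $\etam + \delta_2\big(\triangle\big) + o(1)$ with an oscillating term satisfying $|\delta_2\big(\triangle\big)| < \delta_2$, so one only gets $\Var\big(\esttriangle\big) \le \eta^2 \triangle^2$ --- which is the favourable direction for the Chebyshev bound, and is exactly the inequality step the paper uses to close the argument.
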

The proof of this theorem is again based on Theorem~\ref{thm:hyperloglogthm1}, and is given in Appendix~\ref{sec:proofs}.

The \vp inequality \cite{vysochanskij1980justification}, which holds whenever an estimator has a unimodal distribution, can also be used in order to get a slightly tighter error bound:
\begin{theorem}[\vp bound for the triangle estimator]\label{th:trianglevperrorbound}
If $\esttriangle$ has a unimodal distribution, then for all $v \in V, r \geq 1$, the triangle estimator $\esttriangle$ satisfies:
\begin{align}
    P\Bigg(\esttriangle \in \Big(\E\big(\esttriangle\big) - \lambda, \E\big(\esttriangle\big)& + \lambda\Big)\Bigg) \geq 1 - \frac{4 \eta^2 \triangle^2 }{9 \lambda^2},
\end{align}
for $\lambda > \sqrt{8/3 \cdot \Var\big(\triangle\big)}$ and $\eta$ as in Theorem~\ref{th:cheberrorboundtriangles}.
\end{theorem}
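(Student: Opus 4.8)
The plan is to apply the \vp inequality (Theorem~\ref{th:vpinequality}) to the random variable $X = \esttriangle$, in exactly the same way Chebyshev's inequality was applied in the proof of Theorem~\ref{th:cheberrorboundtriangles}; the only change is the sharper tail bound available under unimodality. The single quantitative ingredient I need is the variance of $\esttriangle$, which I would read off from Theorem~\ref{thm:hyperloglogthm1}(ii).

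First, I would argue that $\esttriangle$ is the output of \hll applied to the multiset of hashed triangles contained in $\mathcal{B}_r(v)$: each triangle in the ball receives a unique hash value and is added to the counter possibly many times along the \hb iterations, so this multiset has distinct cardinality $\triangle$ and may be treated as an ideal multiset of cardinality $\triangle$, exactly as in the node and edge cases. Theorem~\ref{thm:hyperloglogthm1}(ii) then gives
\begin{align}
    \frac{1}{\triangle}\sqrt{\Var\big(\esttriangle\big)} = \etam + \delta_2 + o_1\big(\triangle\big) = \eta,
\end{align}
with $\eta$ as in Theorem~\ref{th:cheberrorboundtriangles}, so that $\Var\big(\esttriangle\big) = \eta^2\,\triangle^2 =: \sigma^2$.

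Second, using the hypothesis that $\esttriangle$ is unimodal together with the assumption $\lambda > \sqrt{8/3 \cdot \Var\big(\esttriangle\big)} = \sqrt{8/3}\,\sigma$, which is precisely the condition $\lambda/\sigma > \sqrt{8/3}$ required by Theorem~\ref{th:vpinequality}, I would apply the \vp inequality to obtain
\begin{align}
    P\big(|\esttriangle - \E(\esttriangle)| \geq \lambda\big) \leq \frac{4\sigma^2}{9\lambda^2} = \frac{4\eta^2\,\triangle^2}{9\lambda^2}.
\end{align}
Taking complements and observing that $\{|\esttriangle - \E(\esttriangle)| < \lambda\}$ is exactly the event $\{\esttriangle \in (\E(\esttriangle) - \lambda, \E(\esttriangle) + \lambda)\}$ then yields the claimed lower bound.

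The argument is otherwise routine; the point requiring the most care is the justification that the triangle counter may be modelled as \hll on an ideal multiset of cardinality $\triangle$, so that the standard-error formula of Theorem~\ref{thm:hyperloglogthm1}(ii) applies verbatim with $n = \triangle$. I would also flag the minor notational matter that the variance appearing in the threshold $\lambda > \sqrt{8/3 \cdot \Var\big(\triangle\big)}$ must be read as the variance of the \emph{estimator} $\esttriangle$, rather than of the deterministic count $\triangle$, so that it matches the $\sigma^2$ of Theorem~\ref{th:vpinequality}; once this identification is fixed, substituting $\sigma^2 = \eta^2\,\triangle^2$ closes the proof.
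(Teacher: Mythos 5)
Your proposal is correct and takes essentially the same approach as the paper: the paper's proof of this theorem is simply the proof of Theorem~\ref{th:cheberrorboundtriangles} (read $\E$ and $\Var$ of $\esttriangle$ from Theorem~\ref{thm:hyperloglogthm1}, then apply the concentration inequality) with the \vp inequality substituted for Chebyshev's, which is exactly what you do, including the correct reading of $\Var\big(\triangle\big)$ as the variance of the estimator. The only minor imprecision is your equality $\Var\big(\esttriangle\big) = \eta^2 \triangle^2$: since $\eta$ uses the constant bound $\delta_2 = 5\cdot 10^{-4}$ on the oscillating term $\delta_2\big(\triangle\big)$, one only has $\Var\big(\esttriangle\big) \leq \eta^2 \triangle^2$, but this inequality points in the right direction for the final bound, and the paper treats $\eta$ as an upper bound in the same way.
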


The proof of Theorem \ref{th:trianglevperrorbound} goes in the same way as the proof of Theorem \ref{th:cheberrorboundtriangles}, but with the \vp inequality instead of Chebyshev's inequality.

\subsection{Error bounds for transitivity}
 The transitivity of a graph $G$ is defined as follows:

\begin{definition}[Transitivity]\label{def:transitivity}
The transitivity of a graph $G$ equals to 
\begin{align}
    t(G) &= \frac{3\Delta(G)}{w(G)},
\end{align}
where $w(G)$ is the number of wedges,  and $\Delta(G)$ the number of triangles  in $G$.
\end{definition}

In order to find the transitivity of ball subgraphs, we need to find the number of wedges, $|w\big(\svr\big)|$, in these ball subgraphs, which we obtain by using Algorithm~\ref{alg:triangleballalgorithm}.

\subsubsection{Error bounds of the transitivity estimator}\label{sec:transitivity_error}
We can find the Chebyshev and \vp error bounds of our estimator for transitivity similarly to Theorem~\ref{th:cheberrorboundtriangles} and~\ref{th:trianglevperrorbound}. The transitivity estimator is
\begin{align}
    \hat{t}\big(\svr\big) := \frac{3\esttriangle}{\estwedge}. \label{eq:transitivity_estimator}
\end{align}
The expectation and variance of our estimators $\esttriangle$ and $\estwedge$ can be obtained from Theorem \ref{thm:hyperloglogthm1}, which results in the following Chebyshev error bound of the transitivity estimator:

\begin{theorem}[Chebyshev bound for the transitivity estimator]\label{th:transitivity_error_chebyshev}
For $v \in V, r \geq 1$, the transitivity estimator $\hat{t}\big(\svr\big)$, as defined in~\eqref{eq:transitivity_estimator}, satisfies:
\begin{align}
    P\Bigg[\hat{t}\big(\svr\big) \in \Bigg(\frac{1 - \epsilon}{1 + \gamma} \cdot t\big(\svr\big), \frac{1 + \epsilon}{1 - \gamma} \cdot t\big(\svr\big)\Bigg)\Bigg] &\geq 1 - \eta^2 \Bigg(\frac{\left(\Delta_r(v)\right)^2}{p_1^2} + \frac{\left(w_r(v)\right)^2}{p_2^2}\Bigg), 
\end{align}
for $p_1, p_2 > 0$ and $\epsilon = \frac{p_1}{\triangle} + \delta_1 + o_1\left(\triangle\right)$, $\gamma = \frac{p_2}{\wedge} + \delta_1 + o_2\left(\wedge\right)$, with $\delta_1 = 5 \cdot 10^{-5}$ and $o_1, o_2 = o(1)$ as their argument goes to infinity. 
\end{theorem}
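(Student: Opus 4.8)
The plan is to reuse, almost verbatim, the two-sided Chebyshev argument behind Theorem~\ref{th:cheberrorboundtriangles}, applied separately to the numerator $\esttriangle$ and the denominator $\estwedge$ of the ratio in~\eqref{eq:transitivity_estimator}, and then to propagate the two resulting relative errors through the quotient exactly as in the ratio-structured bound of Theorem~\ref{th:cheberrorbound}. Recall that by Definition~\ref{def:transitivity} applied to the ball subgraph, $t\big(\svr\big) = 3\triangle/\wedge$, so that the factors of $3$ cancel and $\hat{t}\big(\svr\big)/t\big(\svr\big) = (\esttriangle/\triangle)\big/(\estwedge/\wedge)$. First I would invoke Theorem~\ref{thm:hyperloglogthm1}, treating the hash values assigned to the triangles (respectively wedges) of $\svr$ as an ideal multiset of cardinality $\triangle$ (respectively $\wedge$). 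This gives $\E\big(\esttriangle\big) = \triangle\,(1 + \delta_1 + o_1(\triangle))$ and $\Var\big(\esttriangle\big) = \eta^2\,\triangle^2$ with $\eta = \etam + \delta_2 + o_1(\triangle)$, together with the analogous statements for $\estwedge$ in terms of $\wedge$ and the $o_2$ term.

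Next I would apply Chebyshev's inequality to each estimator. For the triangle count, $P\big(|\esttriangle - \E(\esttriangle)| \geq p_1\big) \leq \Var(\esttriangle)/p_1^2 = \eta^2\triangle^2/p_1^2$, so on the complementary event, dividing through by $\triangle$ and absorbing the (bounded) bias together with the $o(1)$ term into $\epsilon = \tfrac{p_1}{\triangle} + \delta_1 + o_1(\triangle)$, one obtains $1 - \epsilon \leq \esttriangle/\triangle \leq 1 + \epsilon$. The identical step applied to $\estwedge$ with threshold $p_2$ yields $1 - \gamma \leq \estwedge/\wedge \leq 1 + \gamma$ off an event of probability at most $\eta^2\wedge^2/p_2^2$, where $\gamma = \tfrac{p_2}{\wedge} + \delta_1 + o_2(\wedge)$.

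I would then take a union bound: both sandwiching inequalities hold simultaneously with probability at least $1 - \eta^2\big(\triangle^2/p_1^2 + \wedge^2/p_2^2\big)$, and it is worth stressing that this step needs no independence between the two counters, only the subadditivity of probability on the two complementary events. On this joint event I would form the ratio $\hat{t}\big(\svr\big)/t\big(\svr\big) = (\esttriangle/\triangle)\big/(\estwedge/\wedge)$ and bound it above by $(1+\epsilon)/(1-\gamma)$ and below by $(1-\epsilon)/(1+\gamma)$, which rearranges immediately into the claimed interval for $\hat{t}\big(\svr\big)$.

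The delicate points here are bookkeeping rather than conceptual. Passing to the quotient flips the direction of the inequality coming from the denominator, so I must ensure that the lower estimate $1-\gamma$ on $\estwedge/\wedge$ is strictly positive for the lower bound $(1-\epsilon)/(1+\gamma)\,t\big(\svr\big) \le \hat{t}\big(\svr\big)$ to be meaningful; this is guaranteed in the regime $\wedge \to \infty$ in which $\gamma \to 0$. A second point is the sign of the bias: since Theorem~\ref{thm:hyperloglogthm1} controls $|\delta_1|$ rather than $\delta_1$ itself, I absorb $\pm\delta_1$ and the vanishing $o(1)$ contributions into the definitions of $\epsilon$ and $\gamma$, precisely as is already done in Theorem~\ref{th:cheberrorboundtriangles}. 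Everything beyond this is the routine propagation of two relative errors through a ratio, so I expect no genuine obstacle, only care with the quotient inequalities.
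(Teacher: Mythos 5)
Your proposal matches the paper's own proof: the paper handles this theorem by noting that it is proved exactly as Theorem~\ref{th:cheberrorbound} --- Chebyshev's inequality applied to each of the two HyperLogLog estimators, a union bound (which, as you note, needs no independence), and propagation of the two relative errors through the ratio --- with $\esttriangle$ and $\estwedge$ in place of the edge and out-edge estimators. Your observation that the factor of $3$ cancels, so that only a pure quotient must be controlled, makes the final step even cleaner than in the conductance case, where the affine map $2x-1$ intervenes.
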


When this transitivity estimate has a unimodal distribution amongst the nodes, we can again use the \vp inequality in order to get tighter error bounds:

\begin{theorem}[\vp bound for the transitivity estimator]\label{th:transitivity_error_vp}
If $\esttriangle$ and $\estwedge$ have a unimodal distribution, then for all $v \in V, r \geq 1$, the transitivity estimator $\hat{t}\big(\svr\big)$, as defined in \eqref{eq:transitivity_estimator}, satisfies:
\begin{align}
    P\Bigg[\hat{t}\big(\svr\big) \in \Bigg(\frac{1 - \epsilon}{1 + \gamma} \cdot t\big(\svr\big), \frac{1 + \epsilon}{1 - \gamma} \cdot t\big(\svr\big)\Bigg)\Bigg] &\geq 1 - \frac{4}{9}\eta^2 \Bigg(\frac{\Delta_r(v)^2}{\lambda_1^2} + \frac{w_r(v)^2}{\lambda_2^2}\Bigg),
\end{align}
with $\lambda_1 > \sqrt{8/3 \cdot \Var\big(\triangle\big)}$, $\lambda_2 > \sqrt{8/3 \cdot  \Var\big(\wedge\big)}$, $\epsilon = \frac{\lambda_1}{\triangle}  + \delta_1 + o_1\left(\triangle\right)$ and $\gamma = \frac{\lambda_2}{\wedge} + \delta_1 + o_2\left(\wedge\right)$, with $\delta_1, o_1$ and $o_2$ as in Theorem~\ref{th:transitivity_error_chebyshev}.
\end{theorem}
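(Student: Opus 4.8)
The plan is to mirror the proof of Theorem~\ref{th:vperrorbound} for the conductance estimator, since $\hat{t}\big(\svr\big) = 3\esttriangle/\estwedge$ has exactly the same ``ratio of two HyperLogLog counters'' structure as $\estphi$. First I would factor out the true transitivity by writing
\begin{align*}
    \hat{t}\big(\svr\big) = \frac{3\esttriangle}{\estwedge} = t\big(\svr\big)\cdot\frac{\esttriangle/\triangle}{\estwedge/\wedge},
\end{align*}
so that the two-sided bound on $\hat{t}\big(\svr\big)$ reduces to controlling the relative errors of the numerator and the denominator separately.

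For each of the two estimators I would invoke Theorem~\ref{thm:hyperloglogthm1} to read off its (almost unbiased) mean and its standard error $\eta = \etam + \delta_2 + o(1)$, treating the triangle- and wedge-balls as ideal multisets of cardinalities $\triangle$ and $\wedge$ respectively. Because both $\esttriangle$ and $\estwedge$ are assumed unimodal, the VP inequality (Theorem~\ref{th:vpinequality}) applies with thresholds $\lambda_1 > \sqrt{8/3\cdot\Var(\esttriangle)}$ and $\lambda_2 > \sqrt{8/3\cdot\Var(\estwedge)}$, yielding $P\big(|\esttriangle - \E(\esttriangle)|\ge\lambda_1\big)\le \tfrac{4}{9}\Var(\esttriangle)/\lambda_1^2$ and likewise for the wedge count. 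I would then convert each absolute deviation around the mean into a relative deviation around the true value, which is where the bias term $\delta_1$ and the vanishing $o(1)$ term enter: this produces exactly the definitions $\epsilon = \lambda_1/\triangle + \delta_1 + o_1(\triangle)$ and $\gamma = \lambda_2/\wedge + \delta_1 + o_2(\wedge)$, so that the good events become $\{\esttriangle\in((1-\epsilon)\triangle,(1+\epsilon)\triangle)\}$ and $\{\estwedge\in((1-\gamma)\wedge,(1+\gamma)\wedge)\}$.

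Finally I would intersect the two good events. On this intersection, monotonicity of the ratio in its numerator and denominator yields
\begin{align*}
    \frac{1-\epsilon}{1+\gamma}\,t\big(\svr\big) \le \hat{t}\big(\svr\big) \le \frac{1+\epsilon}{1-\gamma}\,t\big(\svr\big),
\end{align*}
and a union bound over the two complementary (bad) events gives a failure probability of at most $\tfrac{4}{9}\big(\Var(\esttriangle)/\lambda_1^2 + \Var(\estwedge)/\lambda_2^2\big)$. Substituting $\Var(\esttriangle)=\eta^2\triangle^2$ and $\Var(\estwedge)=\eta^2\wedge^2$ from the standard-error identity of Theorem~\ref{thm:hyperloglogthm1} turns this into the claimed bound $\tfrac{4}{9}\eta^2\big(\triangle^2/\lambda_1^2 + \wedge^2/\lambda_2^2\big)$.

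The VP tail estimates and the union bound are routine. The step I expect to require the most care is the translation from deviation-around-the-mean to deviation-around-the-true-value: because HyperLogLog is only asymptotically almost unbiased, I must carry the bias $\delta_1$ and the $o(1)$ corrections through the inequalities with the correct signs so that the one-sided slacks land precisely in $\epsilon$ and $\gamma$. A secondary point worth flagging is that no independence between $\esttriangle$ and $\estwedge$ is needed -- the union bound handles their joint behaviour -- which is fortunate, since the two counters are built from overlapping ball subgraphs and are certainly dependent.
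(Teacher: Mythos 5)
Your proposal is correct and takes essentially the same route as the paper: the paper proves this theorem simply by noting it is argued exactly like Theorems~\ref{th:cheberrorbound} and~\ref{th:vperrorbound}, i.e., \vp tail bounds on each counter, conversion of deviations around the (almost unbiased) mean into relative errors around the true values absorbing $\delta_1$ and the $o(1)$ terms, intersection of the two good events, and a union bound — precisely your steps, including the observation that no independence of $\esttriangle$ and $\estwedge$ is required. If anything, your write-up is cleaner for this case, since $\hat{t}\big(\svr\big)$ is a pure ratio and needs no affine correction of the kind ($2x-1$) appearing in the conductance proof.
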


\section{Performance}\label{sec:results}
To evaluate the performance of our algorithms, we first run a series of experiments on artificial LFR graphs \cite{lancichinetti2008benchmark}. In LFR graphs, the nodes are divided into pre-defined communities, and one can choose  a \textit{mixing parameter} $\mu\in [0,1]$, which is the probability that an edge emanating from a node connects to a node in outside of its community. The LFR model involves a number of other parameters: the minimum and maximum community size $(|C_{min}|, |C_{max}|)$, the average and maximum degree $(\bar{d}, d_{max})$, the power-law exponent of the inverse cumulative distribution of  the node degrees ($\tau_1$), and the  power-law exponent of the inverse cumulative distribution of community sizes ($\tau_2$). We have used the LFR graph generator of NetworkX 2.4 in Python 3.6.9 to create LFR graphs with three different sets of  parameters as shown in Table \ref{tab:lfr-parameters}. We have used graphs with 1000 and 5000 nodes because in these small graphs we can find the exact number of edges, directed edges, wedges and triangles in all ball graphs, and  compare the performance of our algorithms to these exact results.

\begin{table}[ht]
\centering
\begin{tabular}{lllllllll}
\textbf{Graph} & \textbf{$n = |V|$} & \textbf{$\tau_1$} & \textbf{$\tau_2$} & \textbf{$|C_{min}|$} & \textbf{$|C_{max}|$} & \textbf{$\bar{d}$} & \textbf{$d_{max}$} & \textbf{$\mu$} \\ \hline
LFR-1 & 1000 & 2 & 3 & 10 & 50 & 10 & 50 & 0.3 \\
LFR-2 & 1000 & 2 & 3 & 20 & 100 & 20 & 100 & 0.3 \\
LFR-3 & 5000 & 2 & 3 & 10 & 50 & 10 & 50 & 0.3
\end{tabular}
\caption{Parameters for the generated LFR-graphs}
\label{tab:lfr-parameters}
\end{table}

For the experiments on LFR graphs, we have used a mixing parameter $\mu = 0.3$, since this was the smallest mixing parameter with no isolated nodes in every generated graph. We have investigated the conductance, number of triangles and transitivity in ball subgraphs of radii 1 and 2. A ball larger than this radius consists of a large part of the entire graph.

\subsubsection{Conductance}
Figure \ref{fig:conductance_lfr3} shows the exact and the estimated conductance in a LFR-3 graph in ball subgraphs of radius~1 (Figure \ref{fig:conductance_lfr3}a) and radius~2 (Figure \ref{fig:conductance_lfr3}b). On the horizontal axis, the nodes $v$ are arranged in the order of ascending conductance of $\svr$, $r=1,2$.   The DIP test for unimodality \cite{hartigan1985dip} gives a small $p-$value of $0.0085$, therefore we can reasonably assume that the conductance estimator is unimodal and apply the \vp error bounds (Theorem \ref{th:vperrorbound}).
Figure \ref{fig:conductance_lfr3} shows that the \vp bounds are tight and represent well the $95\%$-margin of the estimation.

\begin{figure}[ht]
    \begin{subfigure}{.49\textwidth}
    \centering
    \includegraphics[width = \textwidth]{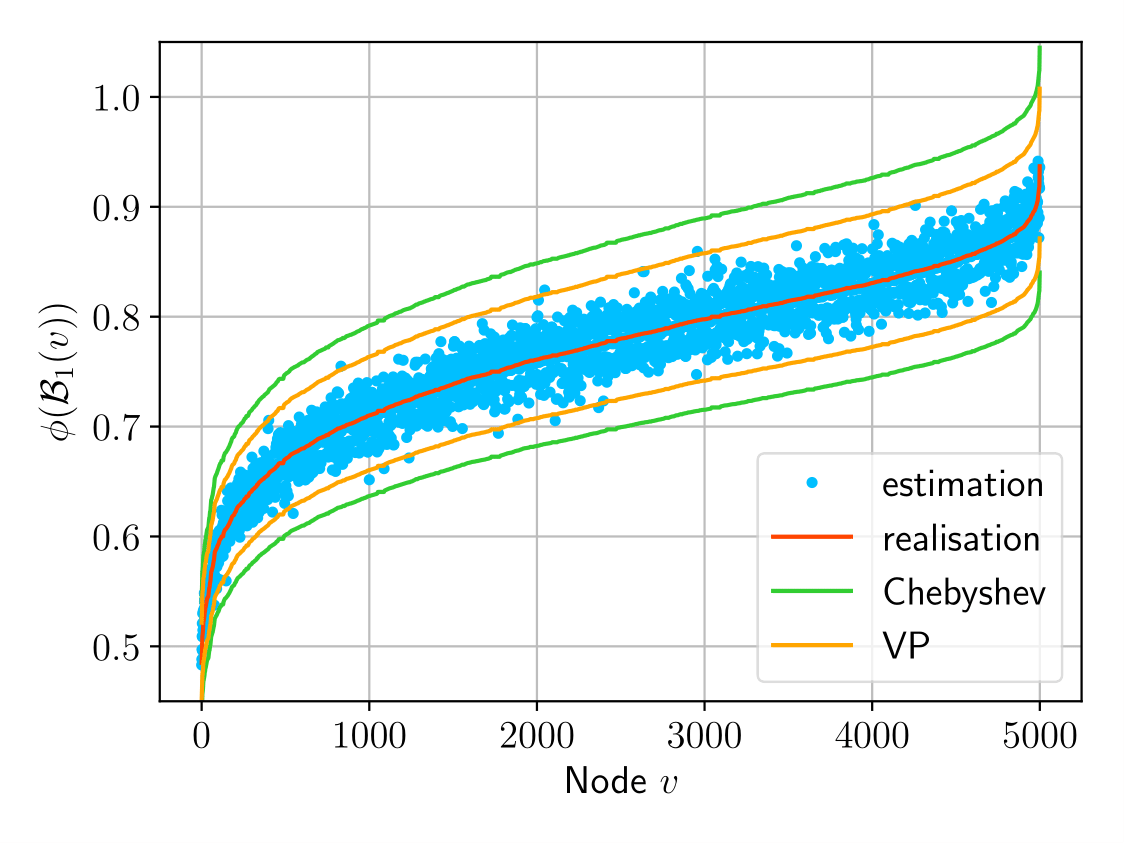}
    \caption{Conductance in $\svone$}
    \label{fig:conductance_lfr3-1}
    \end{subfigure}
    \begin{subfigure}{.49\textwidth}
    \centering
    \includegraphics[width = \textwidth]{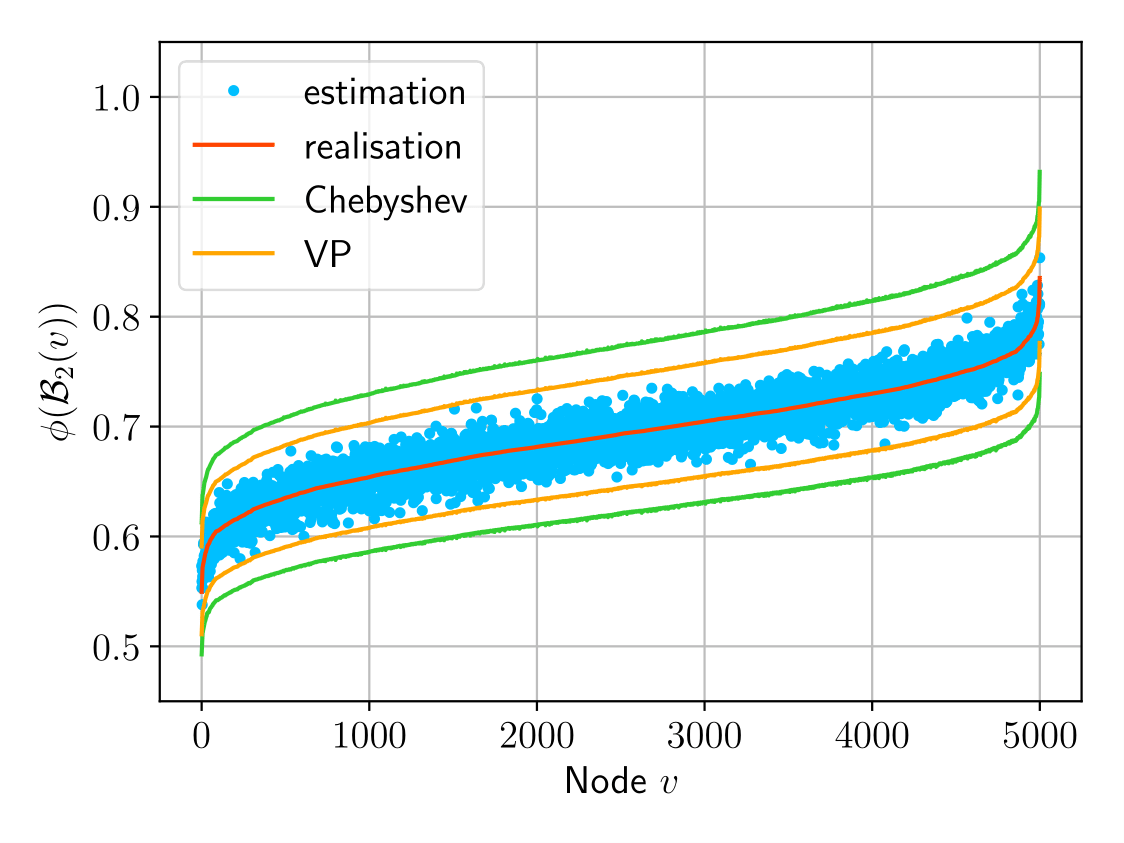}
    \caption{Conductance in $\svtwo$}
    \label{fig:conductance_lfr3-2}
    \end{subfigure}
    \caption{Estimated and exact conductance in the LFR-3 graph with $p = 2^{14}$ registers. The nodes $v$ are sorted by realised conductance of $\svr$ in ascending order.}
    \label{fig:conductance_lfr3}
\end{figure}

In order to investigate how the precision improves with increasing the number of registers, We have experimented with  different numbers of registers in the LFR-1 graph. The results are  shown in Table \ref{tab:lfr1-different-registers}. As we can see from this table, when the number of registers increases, the \vp and Chebyshev error bounds and the experimental error tighten rapidly, and the mean error decreases but keeps oscillating around 0, as expected from Theorem 1 of \cite{flajolet2007hyperloglog}. 

\begin{table}[ht]
\centering
\begin{tabular}{l|rrrrr}
\multicolumn{1}{c|}{$\mathbf{p}$} &
  \multicolumn{1}{c}{\textbf{VP}} &
  \multicolumn{1}{c}{\textbf{Chebyshev}} &
  \multicolumn{1}{c}{\textbf{Mean error}} &
  \multicolumn{1}{c}{\textbf{Variance error}} \\ \hline
$2^{8}$  & $\pm 0.7506$ & $\pm 1.367$ & $3.858 \cdot 10^{-3}$  & $0.01485$               \\ \hline
$2^{10}$ & $\pm 0.3219$ & $\pm 0.5230$ & $3.987 \cdot 10^{-4}$  & $3.236 \cdot 10^{-3}$   \\ \hline
$2^{12}$ & $\pm 0.1520$ & $\pm 0.2377$ & $2.945 \cdot 10^{-4}$  & $7.631 \cdot 10^{-4}$   \\ \hline
$2^{14}$ & $\pm 0.07560$ & $\pm 0.1155$ & $-8.756 \cdot 10^{-6}$ & $1.932 \cdot 10^{-4}$   \\ \hline
$2^{16}$ & $\pm 0.03929$ & $\pm 0.05954$ & $-6.172 \cdot 10^{-5}$ & $4.950 \cdot 10^{-5}$  \\ \hline
$2^{18}$ & $\pm 0.02158$ & $\pm 0.03285$ & $2.175 \cdot 10^{-5}$  & $1.183 \cdot 10^{-5}$ 
\end{tabular}
\caption{Average error bounds and experimental bounds for conductance in the ball subgraph $\svone$ in 100 generated LFR-1 graphs with different numbers of registers.}
\label{tab:lfr1-different-registers}
\end{table}

\subsection{Triangles}
In Figure \ref{fig:triangles_lfr} we show the number of triangles and their estimates in the LFR-3 graph in ball subgraphs of radii 1,2, and 3. Since the estimate for the number of triangles  again has a low $p-$value on the DIP test for unimodality $(p < 0.01)$, we can again use the \vp bounds. There is a large difference in the number of triangles between the balls of radius 2 and 3, which can be explained by the fact that ball subgraphs of radius 3 contain a large fraction of the entire graph.

\begin{figure}[ht]
    \centering
    \includegraphics[width = 0.45\textwidth]{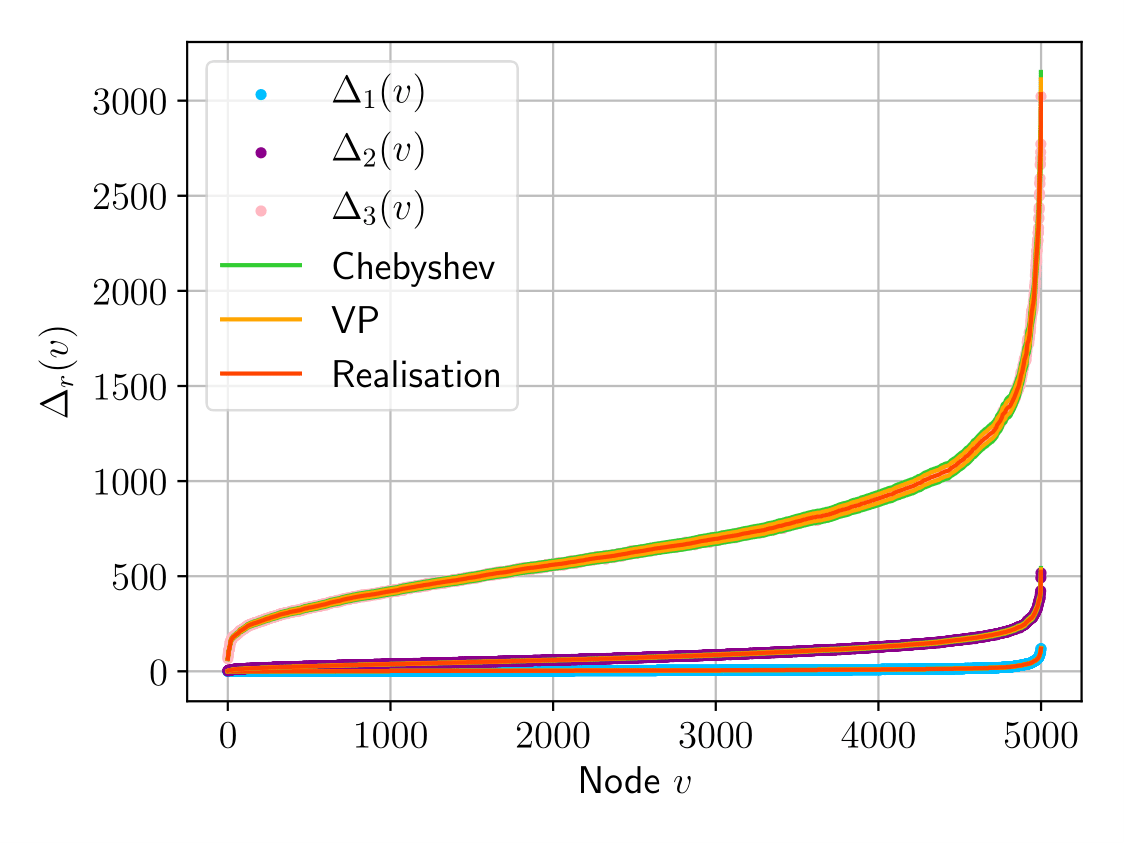}
    \caption{Estimate and realisation of the number of triangles $\Delta_r(v)$ in the ball subgraphs of radius $r=1,2,3$ in the LFR-3 graph; the number of registers is $p = 2^{14}$. The nodes $v$ are sorted by the realised number $\Delta_r(v)$ in ascending order.}
    \label{fig:triangles_lfr}
\end{figure}

\subsubsection{Transitivity}
In Figure \ref{fig:transitivity_lfr}, we show the transitivity in the ball subgraphs $\svone$ (Figure \ref{fig:transitivity_lfr-1}) and $\svtwo$ (Figure \ref{fig:transitivity_lfr-2}), together with the bounds from Theorem \ref{th:transitivity_error_chebyshev} and \ref{th:transitivity_error_vp}. 
\begin{figure}[ht!]
    \begin{subfigure}{0.45\textwidth}
    \centering
    \includegraphics[width = \textwidth]{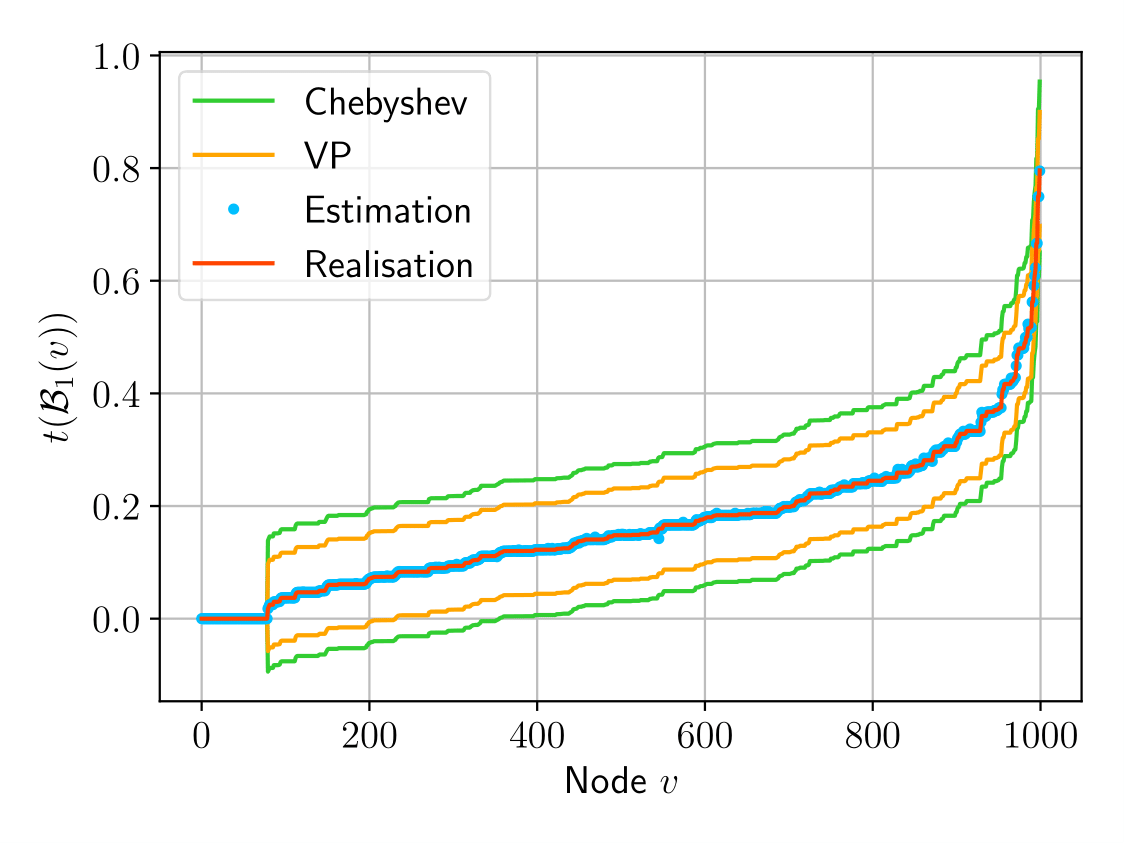}
    \caption{Transitivity in $\svone$}
    \label{fig:transitivity_lfr-1}
    \end{subfigure}
    \begin{subfigure}{0.45\textwidth}
    \centering
    \includegraphics[width = \textwidth]{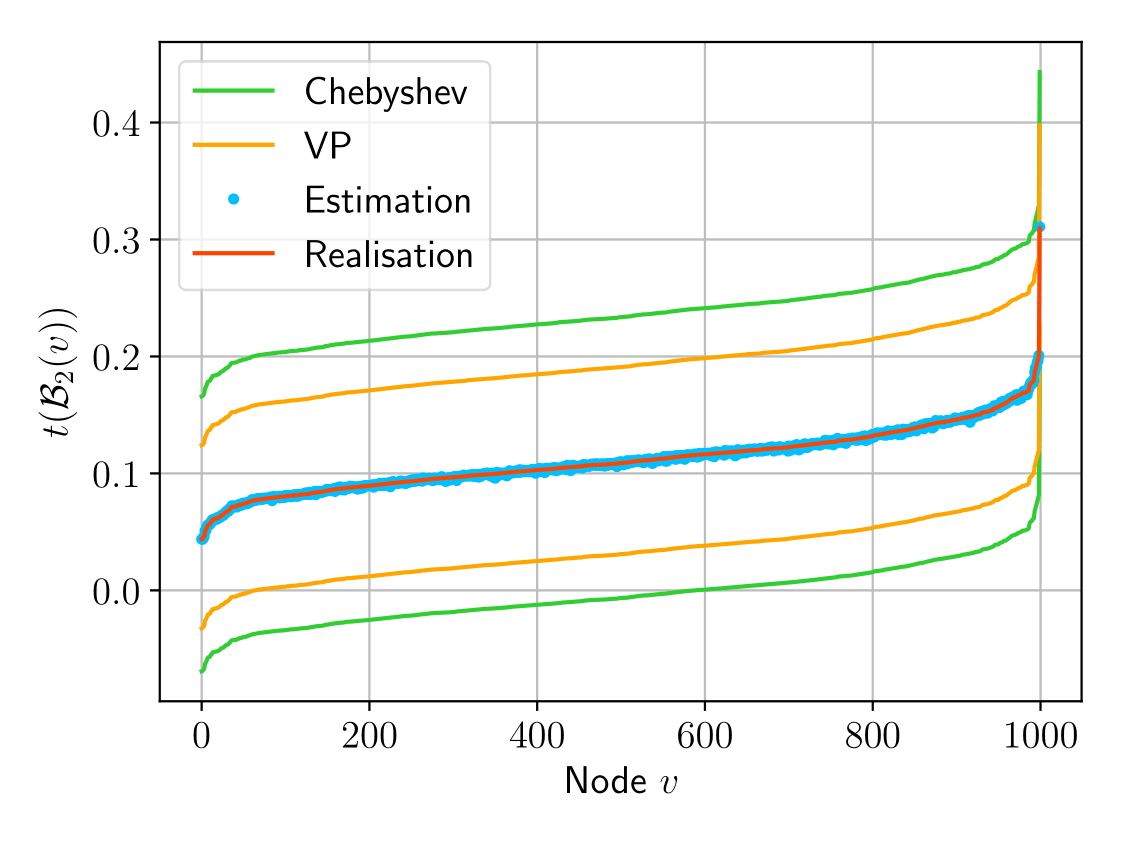}
    \caption{Transitivity in $\svtwo$}
    \label{fig:transitivity_lfr-2}
    \end{subfigure}
    \caption{Transitivity estimate in a LFR-1 graph with $p = 2^{14}$ registers. The nodes $v$ are sorted by realised transitivity in $\svr$, $r=1,2$, in ascending order.}
    \label{fig:transitivity_lfr}
\end{figure}

The DIP test for unimodality gives a $p-$value lower than $0.01$ so the \vp error bounds can again be used. Interestingly, the transitivity in ball subgraphs of radius 2 is already very small, which means that these ball subgraphs contain significantly more wedges than triangles. This is again an indication that the best communities in these kind of graphs are between the ball subgraphs of radius 1 and the ball subgraphs of radius 2, as also shown in \cite{gleich2012vertex}. Note that our estimation of the transitivity based on the \hb type algorithms is very precise, in fact the precision is much higher than suggested by the error bounds. This large precision may be explained by the fact that the number of wedges in a graph is often very large. This improves the precision of the transitivity estimator as well.

\section{Application to community detection}\label{sec:com_detection}
In this section we will show how our \hb-based algorithms can help to to detect communities in real-world networks. As an example, we use two large networks with a community structure: \textsc{com-DBLP} and \textsc{com-Amazon} \cite{snapnets, yang2015defining}. Table \ref{tab:real-life-graphs-properties} summarises the properties of these graphs.

\begin{table}[ht]
\centering
\begin{tabular}{l|llll}
\multicolumn{1}{c|}{\textbf{}} & \multicolumn{1}{c}{$n = |V|$} & \multicolumn{1}{c}{$|E|$} & \multicolumn{1}{c}{$|\Delta|$} \\ \hline
\textsc{com-DBLP}    & 317k & 1M   & 2M   \\ \hline
\textsc{com-Amazon}  & 334k & 925k & 667k 
\end{tabular}
\caption{Properties of the used real-world graphs}
\label{tab:real-life-graphs-properties}
\end{table}

We will identify the communities in these networks using the {\sc PageRank-Nibble} algorithm \cite{andersen2006local} that detects a community by finding a set of nodes with low conductance, starting from a random seed set of nodes. We propose to enhance this algorithm by using alternative seed sets found by our \hb algorithms. For example, a seed set can consist of ball subgraphs with the smallest conductance, or the largest density of triangles, or the largest transitivity. We have implemented this approach using 5 different seed sets, each of 100 nodes, in {\sc PageRank-Nibble}:
\begin{enumerate}
    \item {\it $\phi$ -seeds:} nodes $v$ with smallest conductance in $\svone$ and $\svtwo$;
    \item {\it $\Delta$-seeds:} nodes $v$ with largest number of triangles of radius 0 and radius 1, $|\Delta_0(v)|$ and $|\Delta_1(v)|$;
    \item {\it $t$-seeds:} nodes $v$ with highest transitivity in $\svone$ and $\svtwo$;
    \item {\it degree seeds:} nodes with highest degree;
    \item {\it random seeds:} randomly chosen nodes.
\end{enumerate}
For the {\sc PageRank-Nibble} algorithm, we have used a maximum cut size of 200, a teleport probability of $\alpha = 0.85$ and we calculate the $\epsilon$-approximate PageRank vector with $\epsilon = 10^{-8}$ \cite{andersen2006local}. We then compare the resulting conductance obtained with seed sets listed above.  

The results are presented in Figures \ref{fig:boxplotLFR_amazon} and \ref{fig:boxplotLFR_dblp}. For both graphs the lowest conductance subgraphs are found with the $\phi$-seeds. We notice that the found sets of small conductance are often small nearly isolated sets of nodes. Interestingly, using the $\phi$-seeds, {\sc PageRank-Nibble} is able to find such sets, while with degree seeds it fails to do so. 

In the Amazon graph, $t$-seeds result in low conductance subgraphs after {\sc PageRank-Nibble}, suggesting that the local high transitivity is a good indication for a community. Notice that random seeds also yield low conductance sets, but $t$-seeds clearly outperform this benchmark. 

\begin{figure}[ht]
    \begin{subfigure}{0.45\textwidth}
    \centering
    \includegraphics[width = \textwidth]{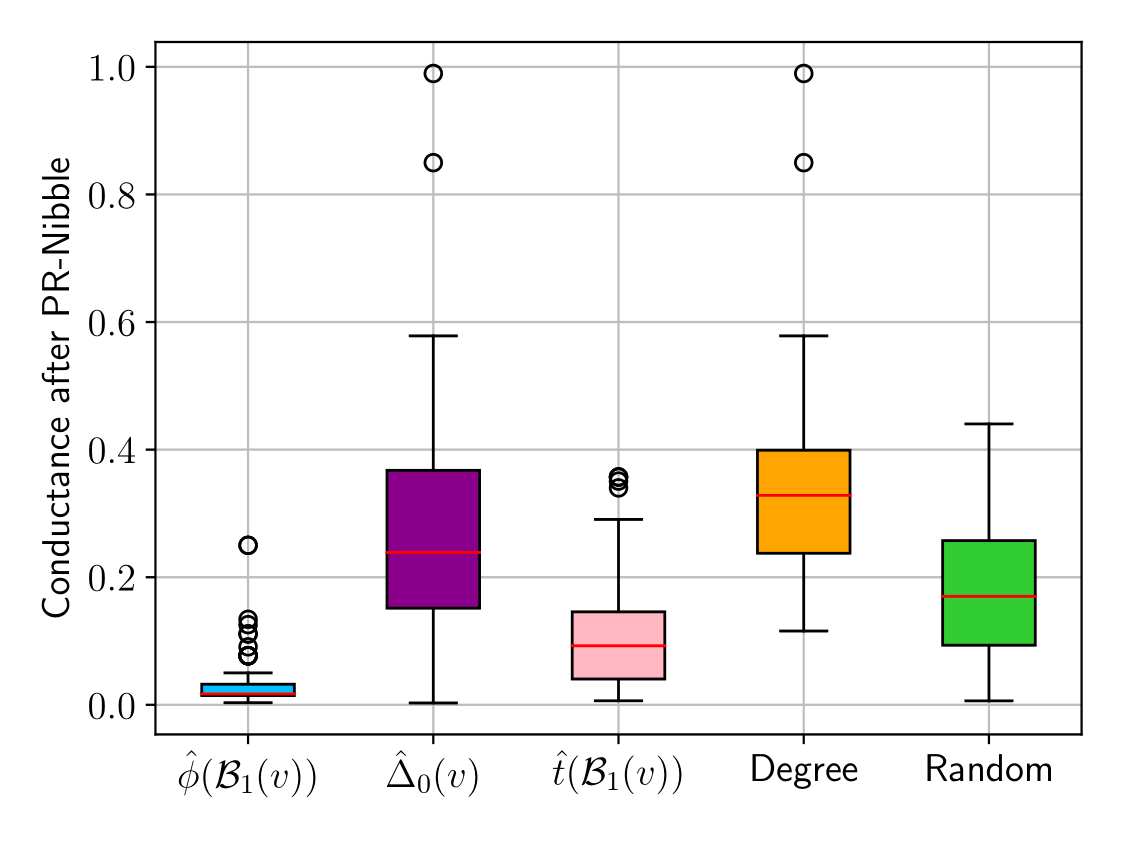}
    \caption{Ball subgraph of radius 1}
    \label{fig:boxplot_PR_nibble_amazon-1}
    \end{subfigure}
    \begin{subfigure}{0.45\textwidth}
    \centering
    \includegraphics[width = \textwidth]{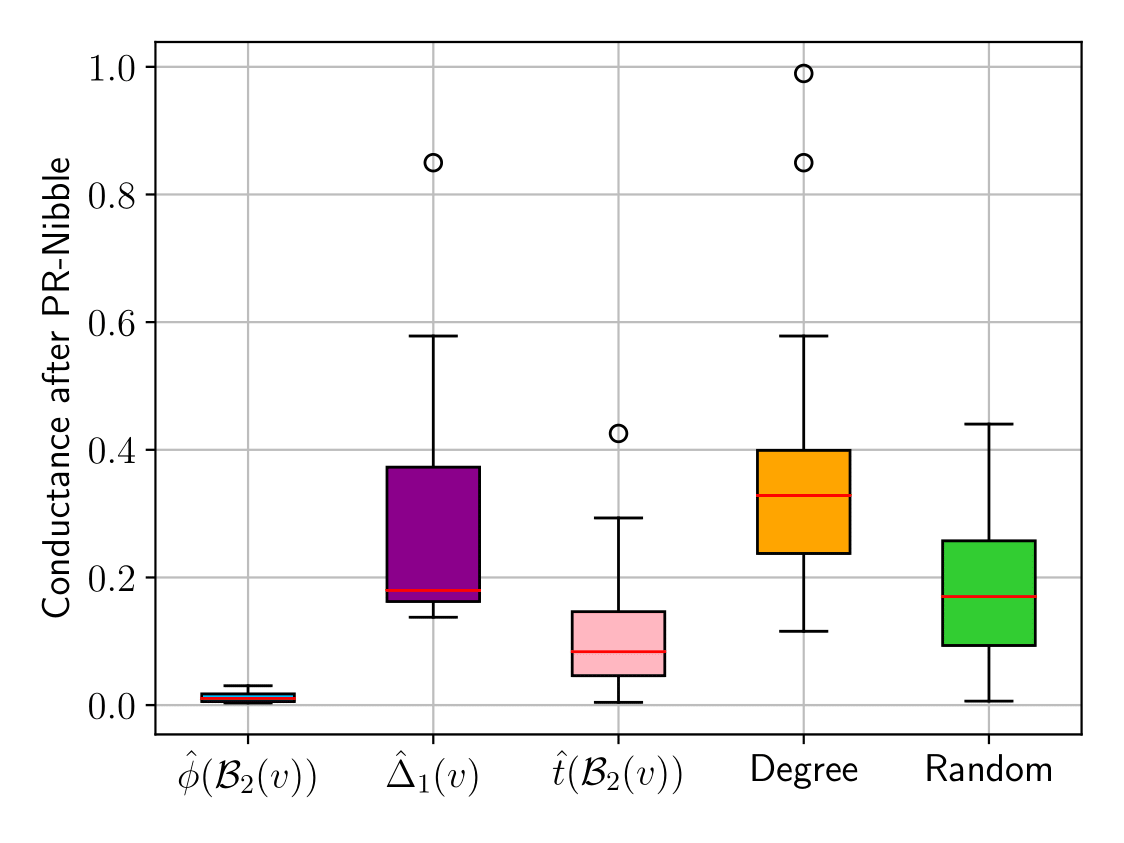}
    \caption{Ball subgraph of radius 2}
    \label{fig:boxplot_PR_nibble_amazon-2}
    \end{subfigure}
    \caption{Boxplots of the resulting conductance after PR-Nibble in every seed set in the Amazon graph}
    \label{fig:boxplotLFR_amazon}
\end{figure}

In the DBLP graph, Figure \ref{fig:boxplotLFR_dblp}, besides the $\phi$-seeds, both $\Delta$-seeds and $t$-seeds result in sets of low conductance after {\sc PageRank-Nibble}. Moreover, when seeds are chosen based on ball subgraphs of radius 2, $\Delta$-seeds  work the best. Not only this helps us to detect communities, but we also obtain an insight that as communities have high number and density of triangles, it is important to use this knowledge to detect them. Figure \ref{fig:boxplotLFR_dblp} confirms that it is an important feature for community detection in this network: the $t$-seeds and the $\Delta$-seeds  result in a much lower conductance after {\sc PageRank-Nibble} in comparison to the degree seeds and random seeds.

\begin{figure}[ht]
    \begin{subfigure}{0.45\textwidth}
    \centering
    \includegraphics[width = \textwidth]{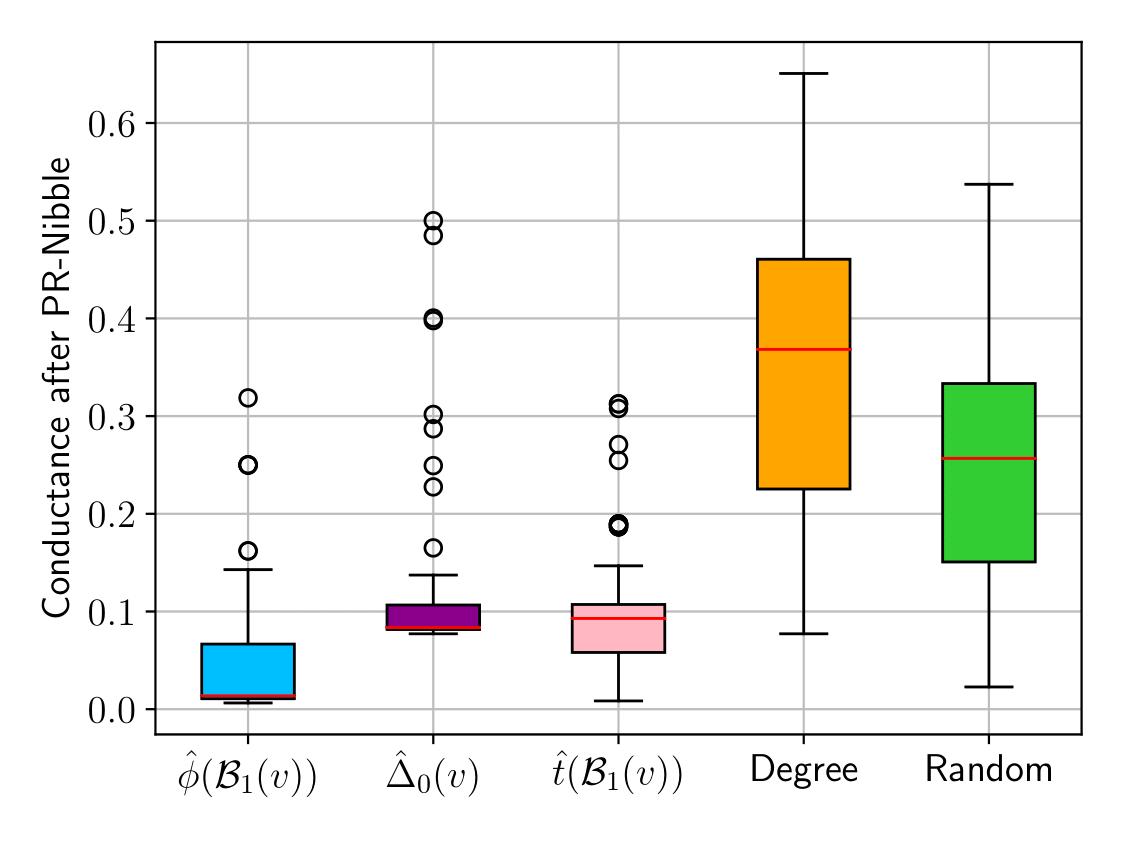}
    \caption{Ball subgraph of radius 1}
    \label{fig:boxplot_PR_nibble_dblp-1}
    \end{subfigure}
    \begin{subfigure}{0.45\textwidth}
    \centering
    \includegraphics[width = \textwidth]{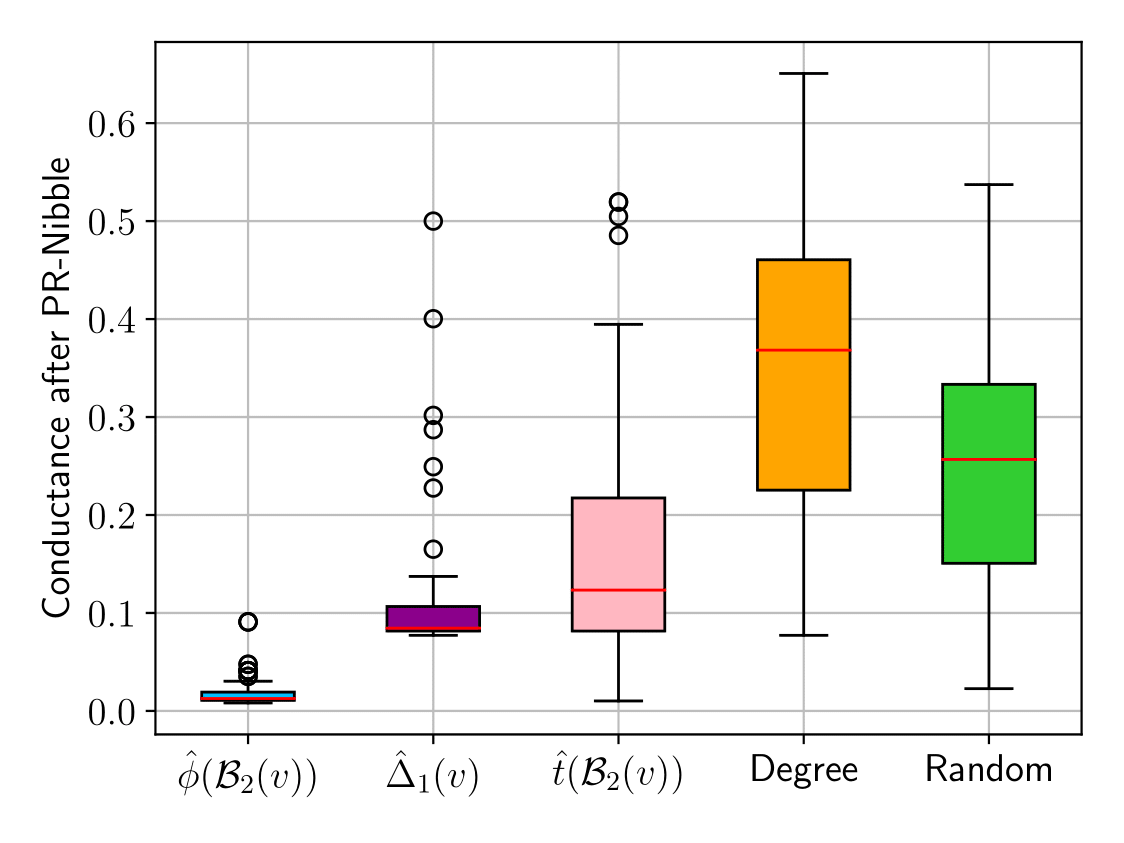}
    \caption{Ball subgraph of radius 2}
    \label{fig:boxplot_PR_nibble_dblp-2}
    \end{subfigure}
    \caption{Boxplots of the resulting conductance after PR-Nibble in every seed set in the DBLP graph}
    \label{fig:boxplotLFR_dblp}
\end{figure}
We conclude that while the performance of different seed sets differ in different real-world networks, our \hb-based convincingly outperforms the random and degree-based seed sets. 

\section{Discussion}\label{sec:discussion}
In this paper, we showed new applications of the \hb algorithm to count triangles and wedges. This enables us  to approximate statistics like the transitivity and the conductance in ball subgraphs. Our estimates have good precision, for which we have derived explicit bounds. In this paper we demonstrated how these algorithms can be applied to choose good seed sets for community detection and to understand in more detail the structure of  the communities. 

Moreover, we showed that \hb can be extended to count the number of graphlets of any form in ball subgraphs. This gives us means to find the areas in large networks with high concentration of graphlets of specific kind. Potentially this will yield new ways to find anomalities in networks~\cite{chen2013identification,becchetti2006link}, to distinguish the structure of networks of different nature (e.g. biological, technological or social networks)~\cite{milo2002network}, and to compare real-world networks to mathematical random graph models, where the results on most likely locations of particular graphlets have been explicitly derived in recent literature~\cite{vanderhofstad2020optimal,stegehuis2019variational}. The bottleneck of the \hb-type algorithms for local graphlet count is the initialisation that requires, for each node $v$, the exact count of the graphlets that involve $v$. How to resolve this bottleneck remains an interesting question for further research.

\bibliographystyle{plain}
\bibliography{sample}

\appendix
\counterwithin{algorithm}{section}
\section{HyperLogLog and HyperBall algorithm}\label{sec:appendix_algorithms}

\begin{algorithm}[H]
\begin{algorithmic}[1]
\State \textit{Let $h_b(x)$ be the first $b$ bits of the hashed value of element $x$}
\State \textit{Let $h^b(x)$ be the other part of the hashed value of  element $x$}
\State \textit{Let $\rho(h^b(x))$ be the position of the leftmost 1-bit ($\rho(001\cdots) = 3$).}
\State 
\State \textbf{initialise} a collection of $p = 2^b$ registers, $M[1], \ldots, M[p], to -\infty$.
\State
\Function{\textsc{Add}}{}{($M$:counter, $x$: item)}
	\State $i \leftarrow h_b(x)$
	\State $M[i] \leftarrow \max\{M[i], \rho(h^b(x))\}$
\EndFunction
\State
\Function{\textsc{Size}}{}{($M$: counter)}
	\State $Z \leftarrow \big( \sum_{j=0}^{p-1} 2^{-M[j]}\big)^{-1}$
	\State $E \leftarrow \alpha_p p^2 Z$
	\State
	\Return $E$
\EndFunction

\State
\For{\textbf{each} $x \in \mathcal{M}$}
	\State \textsc{Add}$(M, x)$
\EndFor
\State

\State \textbf{return} \textsc{Size}$(M)$
\end{algorithmic}

\caption{The \hll algorithm as described in \cite{flajolet2007hyperloglog}, which approximates the cardinality of a data stream $\mathcal{M}$.}
\label{alg:hyperloglog}
\end{algorithm}

\begin{algorithm}[H]
\begin{algorithmic}[1]
\State$c[-]$, an array of \textit{n} HyperLogLog counters initialised with nodes.
\State

\Function{\textsc{Union}}{}{($M$: counter, $N$: counter)}
\For{\textbf{each} $i < p$} 
\State $M[i] \leftarrow \max\{M[i], N[i]\}$ 
\EndFor
\EndFunction

\State

\Function{\textsc{CountBall}}{}{($c$: counter)}
\State $r \leftarrow 0$
\Repeat
\For{each $v \in V$}
	\State $a \leftarrow c[v]$
	\For{\textbf{each} $w \in \mathcal{N}(v)$}\label{alg:neighbourhood}
		\State $a \rightarrow$ \textsc{Union}$(c[w], a)$
	\EndFor
	\State write $\langle v, a\rangle$ to disk
\EndFor
\State update the array $c[-]$ with the new $\langle v, a\rangle$ pairs
\State $r \leftarrow r + 1$
\Until no counter changes its value
\State \Return \textsc{Size}(c)
\EndFunction

\State
\For{\textbf{each} $v \in V$} \Comment{Initialisation}\label{alg:start_initialisation}
\State \textsc{Add}$(c[v],v)$
\EndFor\label{alg:end_initialisation}
\State 

\State $(\widehat{|\mathcal{B}_r|})_{r\geq 1}$ = \textsc{CountBall}($c$)

\end{algorithmic}
\caption{The HyperBall algorithm as described in \cite{boldi2013core}, which returns an estimation of the ball cardinality for each node. The functions \textsc{Add} and \textsc{Size} of Algorithm \ref{alg:hyperloglog} are used.}
\label{alg:hyperball}
\end{algorithm}

\section{Proofs}\label{sec:proofs}

\subsection{Proof of Theorem \ref{th:cheberrorbound}}
To find the error bound of this estimator $\estphi$ we use Chebyshev's inequality. The estimator $\estphi$ is a fraction of two other estimator, $\estedge$ and $\estedgedirected$. The expectation and the variance of these estimators, as the cardinality of the estimated set goes to infinity, are given in Theorem \ref{thm:hyperloglogthm1}:
\begin{align}
    \E\big[\estedge\big] &\leq \edge \cdot \big(1 + \delta_1 + o(1)\big), \label{eq:expedge}\\
    \E\big[\estedgedirected\big] &\leq \edgedirected \cdot \big(1 + \delta_1 + o(1)\big),\\
    \Var\big[\estedge\big] &\leq \eta^2 \cdot \edge^2, \label{eq:varedge}\\
    \Var\big[\estedgedirected\big] &\leq \eta^2 \cdot \edge^2. \label{eq:varedgedirected}
\end{align}

The coefficient $\eta$ is defined as follows:
\begin{align}
\eta &= \etam + \delta_2 + o(1).\label{eq:eta3}
\end{align}
The upper bounds are derived by using the fact that $|\delta_1(x)| \leq 5 \cdot 10^{-5} = \delta_1$ for all $x$ and $|\delta_2(x)| \leq 5 \cdot 10^{-4} = \delta_2$ for all $x$, when the number of registers is larger or equal to $2^4$ (Theorem \ref{thm:hyperloglogthm1}). Throughout this work  we assume that the number of registers is always larger than $2^4$. 

Our goal is to use these estimators to find a lower and upper bound for $\estphi$. Following Chebyshev's theorem, we get the following inequalities for $p_1, p_2 > 0$ when the number of edges and directed edges in $\svr$ tend to infinity:
\begin{align}
    P\Big(\big|\estedge - \E\big[\estedge\big]\big| \geq p_1\Big) &\leq \frac{\edge^2 \cdot \eta^2}{p_1^2},\label{eq:cheb_edge}\\
    P\Big(\big|\estedgedirected - \E\big[\estedgedirected\big]\big| \geq p_2\Big) &\leq \frac{\edgedirected^2 \cdot \eta^2}{p_2^2}\label{eq:cheb_directed_edge},
\end{align}
with $\eta$ as defined in \eqref{eq:eta3}. We can rewrite the left-hand side of \eqref{eq:cheb_edge} using \eqref{eq:expedge}:
\begin{align}
    P\Big(\big|\estedge &- \E\big[\estedge\big]\big| \geq p_1\Big) = P\Big(\big|\estedge - \edge\big(1 + \delta_1\big(\edge\big) + o(1)\big)\big| \geq p_1\Big) \nonumber\\
    &= P\Bigg(\Bigg|\frac{\estedge - \edge\big(1 + \delta_1\big(\edge\big) + o(1)\big)}{\edge}\Bigg| \geq \frac{p_1}{\edge}\Bigg) \nonumber\\
    &= P\Bigg(\frac{\estedge}{\edge} \notin \Big(1 + \delta_1\big(\edge\big) + o(1) - \frac{p_1}{\edge}, 1 + \delta_1\big(\edge\big) + o(1) + \frac{p_1}{\edge}\Big)\Bigg)\nonumber\\
    &\leq P\Bigg(\frac{\estedge}{\edge} \notin (1 - \epsilon, 1 + \epsilon)\Bigg) \leq \frac{\edge^2 \cdot \eta^2}{p_1^2}, \label{eq:estimation_edge}
\end{align}
with $\epsilon = \frac{p_1}{\edge} + \delta_1 + o(1)$. The first inequality follows because $\delta_1(x) \leq \delta_1$ for all $x$, and the last inequality directly follows from \eqref{eq:cheb_edge}. Similarly, using \eqref{eq:cheb_directed_edge}, we obtain that for $\gamma = \frac{p_2}{\edgedirected} + \delta_1 + o(1)$,
\begin{align}
    P\Bigg(\frac{\estedgedirected}{\edgedirected} \notin (1 - \gamma, 
    1 + \gamma)\Bigg) &\leq \frac{\edgedirected^2 \cdot \eta^2}{p_2^2},\label{eq:estimation_directed_edge}
\end{align}
In order to find an error bound for the conductance estimate \eqref{eq:conductanceestimate} instead of the two estimators $\estedge$ and $\estedgedirected$, we define the following two events:
\begin{align}
    A &= \left\{\frac{\estedge}{\edge} \in  (1- \epsilon, 1 + \epsilon)\right\},\\
    B &= \left\{\frac{\estedgedirected}{\edgedirected} \in (1- \gamma, 1 + \gamma)\right\}.
\end{align}
By using the union bound and \eqref{eq:estimation_edge} and \eqref{eq:estimation_directed_edge}, we can express the intersection of events $A$ and $B$ as follows:
\begin{align}
    P \big(A \cap B\big) = 1 - 
    P\big(\bar{A} \cup \bar{B}\big) \geq 1 - P\big(\bar{A}\big) - P\big(\bar{B}\big)\geq 1 - \frac{\edge^2 \cdot \eta^2}{p_1^2} - \frac{\edgedirected^2 \cdot \eta^2}{p_2^2}. \label{eq:unionbound}
\end{align}
Now, we rewrite the event $A \cap B$ to obtain probabilistic bounds for $\estphi$: 
\begin{align}
    P\big(A \cap B\big) &\leq  P\Bigg[\frac{\frac{\estedge}{\edge}}{\frac{\estedgedirected}{\edgedirected}} \in \Bigg(\frac{1 - \epsilon}{1 + \gamma}, \frac{1 + \epsilon}{1 - \gamma}\Bigg)\Bigg]\nonumber\\
    &= P\Bigg[\frac{\frac{\estedge}{\estedgedirected}}{\frac{\edge}{\edgedirected}} \in \Bigg(\frac{1 - \epsilon}{1 + \gamma}, \frac{1 + \epsilon}{1 - \gamma}\Bigg)\Bigg]\nonumber\\
    &= P\Bigg[2\frac{\estedge}{\estedgedirected} - 1 \in \Bigg(\frac{1 - \epsilon}{1 + \gamma} \cdot \Big(2\frac{\edge}{\edgedirected} - 1\Big), \frac{1 + \epsilon}{1 - \gamma} \cdot \Big(2\frac{\edge}{\edgedirected} - 1\Big)\Bigg)\Bigg]\nonumber\\
    &= P\Bigg[\estphi \in \Bigg(\frac{1 - \epsilon}{1 + \gamma} \cdot \phie, \frac{1 + \epsilon}{1 - \gamma} \cdot \phie\Bigg)\Bigg].\label{eq:intersection}
\end{align}

Combining \eqref{eq:unionbound} and \eqref{eq:intersection} results in the error bounds for the conductance estimator:
\begin{align}
    P\Bigg[\estphi \in \Bigg(\frac{1 - \epsilon}{1 + \gamma} \cdot \phie, \frac{1 + \epsilon}{1 - \gamma} \cdot \phie\Bigg)\Bigg] &\geq 1 - \eta^2 \Bigg(\frac{\edge^2}{p_1^2} + \frac{\edgedirected^2}{p_2^2}\Bigg).
\end{align}

\subsection{Proof of Theorem \ref{th:cheberrorboundtriangles}}
\begin{proof}
The estimator of the triangle count of a graph comes directly from the Algorithm \ref{alg:hyperball}. This means that we know the expectation and variance of this estimator when the number of triangles goes to infinity (Theorem \ref{thm:hyperloglogthm1}):
\begin{align}
    \E\big[\esttriangle\big] &= 
    \triangle \cdot \Big(1 + \delta_1\big(\triangle\big) + o(1)\Big),\label{eq:triangleexpectation}\\
    \Var\big[\esttriangle\big] &= \triangle^2 \cdot \Big(\etam + \delta_2\big(\triangle\big) + o(1) \Big)^2,\label{eq:trianglevariance}
\end{align}
for two oscillating functions $\delta_1(x)$ and $\delta_2(x)$ with $|\delta_1(x)| < 5 \cdot 10^{-5} = \delta_1$ and $|\delta_2(x)| < 5 \cdot 10^{-4} = \delta_2$ as soon as $p \geq 16$ registers and $\beta_p \approx 1.03896$ for $p > 128$ registers. 

The proof of this theorem is straightforward: we use in Chebyshev's inequality for the triangle estimator, where we substitute the expectation and variance from Equations \eqref{eq:triangleexpectation} and \eqref{eq:trianglevariance}:
\begin{align}
    P\Big(\big| \esttriangle - \E\big(\esttriangle\big)\big| \geq a\Big) &\leq \frac{\Var\big(\esttriangle\big)}{a^2},\\
    P\Big(\big| \esttriangle - \E\big(\esttriangle\big)\big| \geq a\Big) &\leq \frac{\triangle^2 \cdot \Big(\etam + \delta_2\big(\triangle\big) + o(1) \Big)^2}{a^2},\\
    P\Big(\esttriangle \in \Big(\E\big(\esttriangle\big) - a, \E\big(\esttriangle\big)& + a\Big) \geq 1 - \frac{\Big(\etam + \delta_2\big(\triangle\big) + o(1) \Big)^2 \triangle^2 }{a^2},
\end{align}
for $a > 0$. Since $\eta = \etam + \delta_2 + o(1) \geq \etam + \delta_2\big(\triangle\big) + o(1)$, as defined in \eqref{eq:eta3}, this concludes the proof.
\end{proof}

Theorems \ref{th:transitivity_error_chebyshev} and \ref{th:transitivity_error_vp} are proved in the same way as Theorems \ref{th:cheberrorbound} and  \ref{th:vperrorbound}, since in both cases we need to find error bounds of a ratio of two estimators.

\end{document}